\documentclass[12pt]{article}

\usepackage{amsmath} 
\usepackage{algorithm}
\usepackage{algpseudocode}
\usepackage{amsfonts}   
\usepackage{amssymb}  
\usepackage{pifont}
\usepackage{amsthm}
\usepackage{bbm}
\usepackage{tikz}
\usepackage{siunitx}
\usepackage{multirow}
\usepackage{graphicx}   
\usepackage{hyperref}   
\usepackage{authblk}    
\usepackage[margin=0.75in]{geometry} 
\usepackage[backend=biber, style=numeric,maxbibnames=9, sorting=none]{biblatex}\addbibresource{references.bib}

\usepackage{booktabs}   
\newcommand{\indep}{\perp \!\!\! \perp}
\theoremstyle{definition}
\newtheorem{definition}{Definition}[section]

\newtheorem{proposition}{Proposition}[section]

\newcommand{\innerproduct}[2]{\left\langle #1, #2 \right\rangle}
\title{Quantization-based LHS for dependent inputs : application to sensitivity analysis of environmental models.}
\author[1,2]{Guerlain Lambert\thanks{Email: \texttt{guerlain.lambert@ec-lyon.fr}}}
\author[1]{Céline Helbert\thanks{Email: \texttt{celine.helbert@ec-lyon.fr}}}
\author[2]{Claire Lauvernet\thanks{Email: \texttt{claire.lauvernet@inrae.fr}}}
\affil[1]{Institut Camille Jordan, CNRS UMR 5208, École Centrale de Lyon, Écully, France}
\affil[2]{RIVERLY, INRAE, Villeurbanne, France}
\date{\today}

\begin{document}

\maketitle

\begin{abstract}
Numerical modeling is essential for comprehending intricate physical phenomena in different domains. To handle complexity, sensitivity analysis, particularly screening, is crucial for identifying influential input parameters. Kernel-based methods, such as the Hilbert Schmidt Independence Criterion (HSIC), are valuable for analyzing dependencies between inputs and outputs. Moreover, due to the computational expense of such models, metamodels (or surrogate models) are often unavoidable. Implementing metamodels and HSIC requires data from the original model, which leads to the need for space-filling designs. While existing methods like Latin Hypercube Sampling (LHS) are effective for independent variables, incorporating dependence is challenging. This paper introduces a novel LHS variant, Quantization-based LHS, which leverages Voronoi vector quantization to address correlated inputs. The method ensures comprehensive coverage of stratified variables, enhancing distribution across marginals. The paper outlines expectation estimators based on Quantization-based LHS in various dependency settings, demonstrating their unbiasedness. The method is applied on several models of growing complexities, first on simple examples to illustrate the theory, then on more complex environmental hydrological models, when the dependence is known or not, and with more and more interactive processes and factors. The last application is on the digital twin of a French vineyard catchment (Beaujolais region) to design a vegetative filter strip and reduce water, sediment and pesticide transfers from the fields to the river. Quantization-based LHS is used to compute HSIC measures and independence tests, demonstrating its usefulness, especially in the context of complex models.
\end{abstract}
\noindent
{\bf Keywords:}  design of experiments, vector quantization, Latin hypercube sampling (LHS), HSIC
\vfill

\newpage
\section{Introduction}

Numerical models are used to represent and understand complex physical phenomena in fields such as in biology, geophysics, or hydrology, where processes are highly interactive. Such models can be complicated (e.g., black-box models), expensive, and difficult to use when, for example, the goal is to derive a digital twin for a specific real-world context. Therefore, it may be beneficial to replace the model with a metamodel, or surrogate model, which can be defined as a statistical model of the process-based model, and that is less costly to run. To deal with  numerous input parameters, it is also helpful to perform a two-steps global sensitivity analysis  to understand which inputs have a significant impact on the outputs \cite{saltelli_global_2008}, starting by a screening that allows “eliminating” a large set of inputs that are poorly influential on the studied outputs. For this screening step, Kernel-based methods, such as Hilbert Schmidt Independence Criterion (HSIC) \cite{arthur2005} which measures the dependence between inputs and outputs are very useful and have been implemented for a wide range of applications, with scalar data, vector, functional \cite{redatest}, or even sets \cite{fellmann}. The implementation of a metamodel relies on observations of the original code, which can be costly to obtain. Likewise, implementation of HSIC requires a sample of evaluations of the heavy code. Therefore, it is essential to create  a design of experiments (DOE) that is both small and fills as much space as possible. This is commonly referred to as a space-filling design. Yet, the physical reality of models often imposes dependence between input variables. One example is hydrology, where soil moisture is governed by the Van Genuchten equations \cite{VG}. The parameters of this model depend on the soil type, creating a set of dependent variables. From a design space filling point of view, it's therefore essential to offer a solution that takes dependence into account.
In the case of independent variables several space-filling designs, such as Latin Hypercube Sampling (LHS) \cite{LHS},  and low-discrepancy sequences, in particular Sobol sequences \cite{sobol_distribution_1967}  are commonly used. LHS are often preferred  \cite{Rouzies_gmd_2021} since they ensure space-filling properties, allowing accurate estimation of metamodels and good marginal covering stable after dimension reduction. Besides, they provide good properties for estimation of expectation, as required for HSIC measure computation.  Extensions of LHS to account for dependency have been made by \cite{iman_small_1980,stein_large_1987,Mondal_2020}, using methods based on ranks and copulas, the latter requiring knowledge of copulas and quantile functions not always available in practice. Alternatively, kernel-based methods such as kernel herding \cite{chen2012supersamples} consist in minimizing a squared Maximum Mean Discrepancy (MMD) between an iteratively-built sequence of points and the target correlated joint distribution. These deterministic approaches strongly depend on the choice of the kernels and introduce a bias in estimation of expectations, such as HSIC. There is therefore an interest in providing a ready-to-use method that requires a minimum of assumptions.

In this paper, we introduce a new LHS method based on Voronoi vector quantization (VQ) to take into account correlated inputs. \cite{saka_latinized_2007} proposes a design of experiments based on Voronoi VQ in the manner of LHS, with a Latinization procedure involving ranking. Later, \cite{CorlayPagès} is the VQ to stratification by randomly drawing $M$ points per Voronoi strata to apply it to functional data. The main difficulties with these methods are that they do not take dependency into account. Our new DOE, called quantization-based LHS, is a direct extension of Latin Hypercube sampling based on Voronoi quantization to take into account dependence within a group of input variables. It has good properties because it swaps the bias resulting from the use of Voronoi centroids for variance by randomly drawing a point in the Voronoi cells. This ensures complete coverage of the group of dependent variables being stratified. Combined with random permutations, we get a well-distributed design across all the marginals: the independent part and the dependent part. All that is required is how to simulate its distribution (and thus conditionally on the Voronoi cells).

To this end, in Section \ref{section:2}, existing LHS technics and Voronoi vector quantization are briefly recalled, with special reference to optimal quantization. Then, in Section \ref{section:3}, the contribution of this paper is presented. That is, expectation estimation using LHS in the context of dependent random variables based on vector quantization. Different estimators are proposed to cover three different settings:  a unique group of dependent inputs, a joint distribution between a group of dependent inputs independent to another input, a joint distribution between two independent,  groups of dependent variables. In particular, it is shown that the proposed estimators are unbiased. Finally, quantization-based LHS is applied to the computation of HSIC measures and independence tests in Section \ref{section:4}.
To illustrate the relevance of these developments, they are tested and compared to other methods (Monte Carlo and LHSD) on two operational environmental models in  Section \ref{section:5}: (i) a flood risk model where the dependency structure  between inputs and the marginal laws are perfectly known and (ii) a chain of models that simulates water, sediment and pesticide transfers, where dependency is unknown. This last application is implemented on the digital twin of a real vineyard catchment in France (Morcille experimental site), where  pollution of agricultural origin by pesticides is a recognized public health problem \cite{Morcille_datapaper}.

\section{Existing tools : LHS, LHSD and Voronoi Vector Quantization}\label{section:2}
\subsection{Latin Hypercube Sampling and Latin Hypercube Sampling with dependence}
\subsubsection*{Latin Hypercube Sampling (LHS)}
The objective of an LHS of size $N$, as introduced by \cite{LHS}, is to sample $N$ points uniformly in $[0,1]^d$ such that marginally the projected points are well spread on $[0,1]$. The support of each coordinate, i.e. $[0,1]$, is partitioned into $N$ sub-intervals of equal size $\frac{1}{N}$. The points are drawn in $[0,1]^d$ by associating  each coordinate to one sub-interval and by uniformly sampling in this sub-interval. The LHS procedure is as follows :

\begin{algorithm}
\caption{LHS}
\label{alg:LHS}
\begin{algorithmic}[1]
\State Generate $N$ independent samples $(U_{i1}, \dots, U_{id})_{i = 1, \dots, N}$, where $U_{ij}$ is i.i.d $\mathcal{U}([0,1])$.
\State Generate $d$ independent equiprobable permutations $\pi_1, \dots, \pi_d$ of $\{1, \dots, N\}$. $\pi_{j}(i)$ is the value to which $i$ is mapped by the $j^{\text{th}}$ permutation.
\State An LHS is given by:
\begin{align*}
\displaystyle \begin{cases}
V_{ij} \ =\ \frac{\pi _{j}(i) -1}{N} \ +\ \frac{U_{ij}}{N}\\
j\ \ =\ 1,\dotsc ,d\ \ ,\ i\ =\ 1,\dotsc ,\ N\ 
\end{cases}\end{align*}
\end{algorithmic}
\end{algorithm}
We aim to estimate $\mathbb{E}[f(X)]$ where $X \sim \mathcal{U}([0,1]^d)$, $d\in\mathbb{N}^*$. Given an LHS $(V_{i1},\dots, V_{id})_{1\leq i\leq N}$ of $\mathcal{U}([0,1]^d)$, the LHS estimator of the expected value is :
\begin{align*}
    \mu_{LHS} = \frac{1}{N}\sum_{i=1}^N f(V_{i1},\dots, V_{id})
\end{align*}
It is unbiased, and $Var(\mu_{LHS}) \leq \frac{N\times Var(\mu_{MC})}{N-1}$, this means that using an LHS of size $N$ will not result in any more variance than using a Monte Carlo sample of size $N$, see \cite{owen1997monte}. We also have access to a Central Limit theorem, see \cite{owen1992central}. It stratifies the marginal distribution to maximize coverage of the range of each variable. However, the use of LHS imposes the independence of random variables, which raises an issue if the model inputs are correlated. Therefore, it is crucial to use an appropriate experimental design. Over the years, several modifications to LHS have been proposed to incorporate dependence. For instance, \cite{iman_small_1980} introduced a rank-based approach, further improved by \cite{stein_large_1987}, but this results in a biased estimator. More recently, \cite{Mondal_2020} introduced a copula-based LHS method, LHSD, which enables the incorporation of dependence into the experimental design while retaining the properties of the LHS.
\subsubsection*{Latin Hypercube Sampling with Dependence (LHSD)}
Adding knowledge of the dependency structure through copulas, \cite{Mondal_2020} proposes an extension of LHS and \cite{stein_large_1987}, the LHSD, by constructing the joint distribution from known marginal distributions.  The LHSD procedure proposed by \cite{Mondal_2020} is based on the copula (and conditional copulas) construction summarized in Algorithm \ref{alg:LHSD}. Consider a random vector $X = (X_1, ..., X_d)$ of joint c.d.f $F$ where the marginal c.d.f are denoted $F_j$ for all $j=1, \dots, d$ and $C$ a copula. In particular, $C$ is a distribution with uniform marginals. Sklar's theorem allows a copula to be associated with any multidimensional distribution. The copula makes it possible to model the dependence between the marginals, if $F$ is continuous and let $F_j(X_j) =: U_j$, then :
\begin{align*}
    C(U_1, \dots, U_d) = F\left(F_1^{-1}(U_1),\dots,F_d^{-1}(U_d)\right)
\end{align*}
The conditional copula is defined by : 
\begin{align*}
    C_j(U_j\mid U_1,\dots,U_{j-1}) = \frac{\partial^{j-1}C(u_1,\dots, u_j,1, \dots,1)}{\partial u_1 \dots\partial u_{j-1}}
\end{align*}

\begin{algorithm}
\caption{LHSD from \cite{Mondal_2020}}
\label{alg:LHSD}
\begin{algorithmic}[1]
\State Generate an LHS sample $(Z_{i1}, \dots, Z_{id})_{i = 1, \dots, N}$ of $[0,1]^d$ using Algorithm \ref{alg:LHS}.
\State Construct sequentially $(U_{i1}, \dots, U_{id})_{i = 1, \dots, N}$ from the joint copula using inverse conditional copula functions:
     \begin{align*}
         U_{ij} = C_j^{-1}(Z_{ij} \vert U_{i1}, \ldots, U_{ij-1})
     \end{align*}
     for $j=1\dots d$.
\State Construct the final sample : $(X_{i1}, \dots, X_{id})_{i = 1, \dots, N}$ using the inverse distribution function $X_{ij} = F_j^{-1}(U_{ij})$ for $j = 1, \dots, d$.
\end{algorithmic}
\end{algorithm}
The properties of the LHSD estimator are similar to those of the LHS as justified in \cite{Mondal_2020}. Specifically, when estimating an expected value, the variance of LHSD is lower than the variance from a simple Monte Carlo sample.\\

The aim is to reconstruct the dependence of the marginals sequentially by constructing a sample on $[0,1]$ from the inverse of the conditional copula, starting from an LHS sample. To obtain a LHSD, apply the quantile transformation to this sample.\\

The proposed LHSD method requires knowledge of a copula and the inverse of conditional copulas. Although the copula can be estimated from known families (Gaussian, Clayton, ...), the practical implementation can be quite difficult depending on the complexity of the model and the dependency structure. To avoid these limitations, we propose a space-filling method based on vector quantization to preserve the dependency with few requirements.

\subsection{Background on vector quantization}
Vector quantization was first introduced in signal processing during the 1950s as a method of discretizing continuous signals. It is now widely used in various applications, including speech recognition \cite{rabiner1993fundamentals}, image compression \cite{3776}, and numerical probability \cite{pagès2018numerical}. The latter is of particular interest to us. Consider a probability space $(\Omega, \mathcal{A},\mathbb{P})$.\\

Let $X$ be a random vector with values in $\mathbb{R}^d$ with $\mathbb{P}_X$ its distribution and $N \in \mathbb{N}^*$. Let $\Gamma = \{x_1, \dots, x_N\}\subset \mathbb{R}^d$, the Voronoi partition associated to $\Gamma$ is defined as $(C_i)_{i=1\dots N}$ such that
\begin{align*}
    C_i(\Gamma) \subset \left\{y\in\mathbb{R}^d \hspace{1pt}:\hspace{1pt} \vert y-x_i\vert \leq \min_{1\leq j\leq N} \vert y - x_j\vert\right\}
\end{align*}
The Voronoi quantizer can be defined as follows: 
\begin{align*}
    q_{vor}(X) = \sum_{i=1}^N x_i \mathbbm{1}_{C_i(\Gamma)}(X)
\end{align*}
The quantized variable $q_{vor}(X)$ is often denoted as $\widehat{X}$. It is the discrete version of $X$ with the $N$ support points $x_1, \dots, x_N$. Similarly, the quantized probability distribution $\mathbb{P}_{\widehat{X}}$ of $\mathbb{P}_X$ induced by $\Gamma$ is given by : 
\begin{align*}
    \mathbb{P}_{\widehat{X}} = \sum_{i=1}^N \mathbb{P}_X\left(C_i(\Gamma)\right)\delta_{x_i}
\end{align*}
where $\delta_{x_i}$ is the Dirac mass centered on $x_i$.\\

To assess the performance of the $\widehat{X}$ quantizer, we introduce the quadratic distortion function associated to $\Gamma = \{x_1, \dots, x_N\}$ : 
\begin{align*}
    \mathcal{D}_N^X(\Gamma) = \lVert d(X, \Gamma) \rVert_{L^2(\mathbb{P})}^2 =  \mathbb{E}\left[\min_{1 \leq i \leq N} \vert X - x_i\vert^2\right] = \int_{\mathbb{R}^d} \min_{1\leq i \leq N} \vert x_i - y\vert^2 \mathbb{P}_X(\mathrm{d}y)
\end{align*}
Any quantizer that minimizes distortion is called an optimal quantizer. If $X\in L^2(\mathbb{P})$, then the existence of such quantifiers is assured. However, uniqueness is not systematically obtained (1D case of unimodal distributions) \cite{GrafVQ, pagès2018numerical}. Furthermore, any $N$-optimal quantizer $\widehat{X}$ is a stationary quantizer i.e :
\begin{align*}
    \widehat{X} = \mathbb{E}\left[X\hspace{1pt}\vert\hspace{1pt}\widehat{X}\right]
\end{align*}
In practice, this property enables the construction of fixed-point algorithms to obtain optimal (or at least suboptimal) quantization. The most well-known of these algorithms is Lloyd's algorithm, also known as k-means. It is very popular and easy to implement, especially if the distribution of $X$ is known from a large sample of simulated points. An example of optimal quantization is given in figure \ref{fig:fig1}, based on the centered bivariate normal distribution $X = (X_1, X_2)$ with $cov(X_1, X_2) = 0.8$. 
\begin{figure}[htbp]
    \centering
    \includegraphics[width=0.45\textwidth]{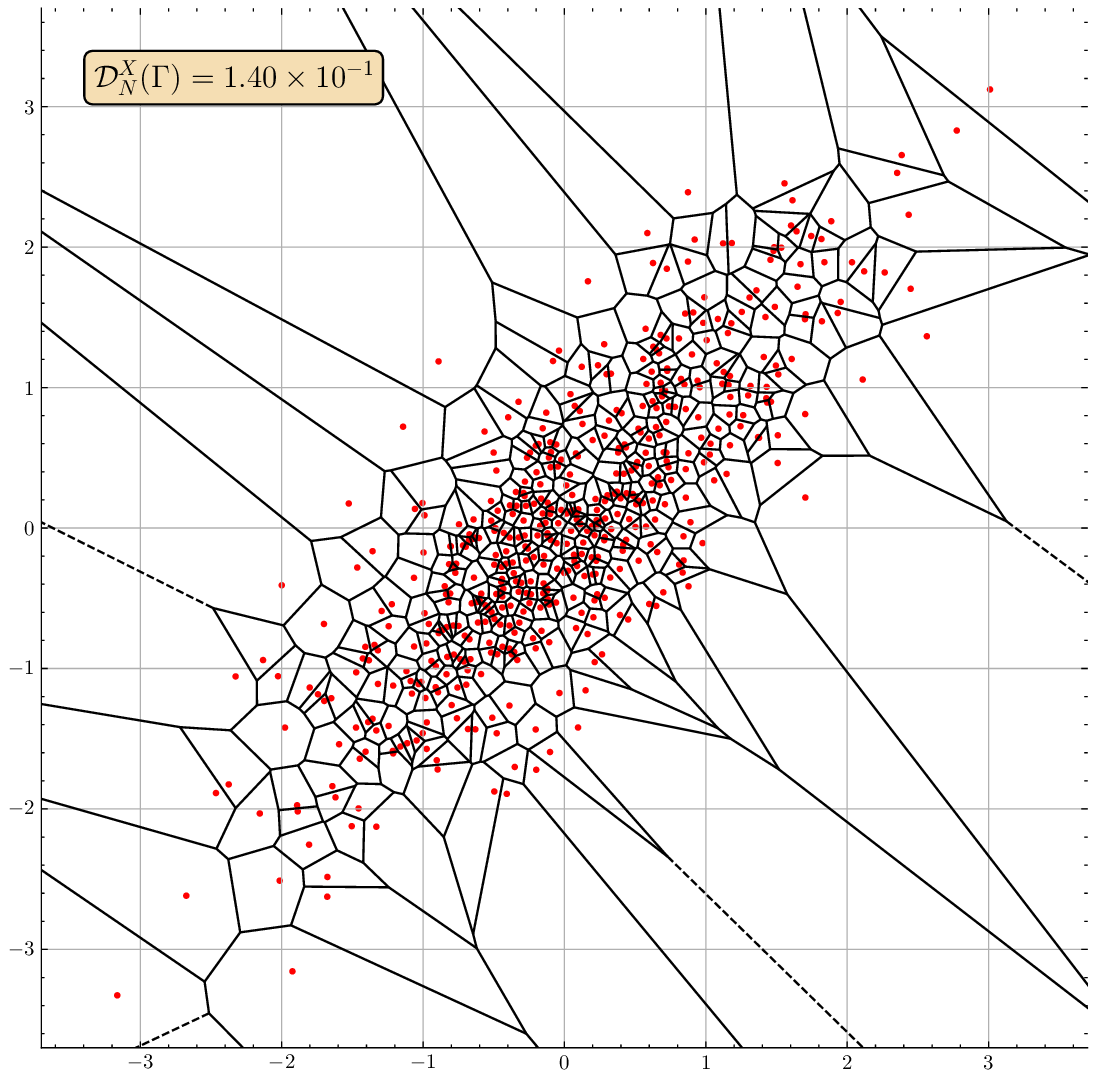}
    \hfill
    \includegraphics[width=0.45\textwidth]{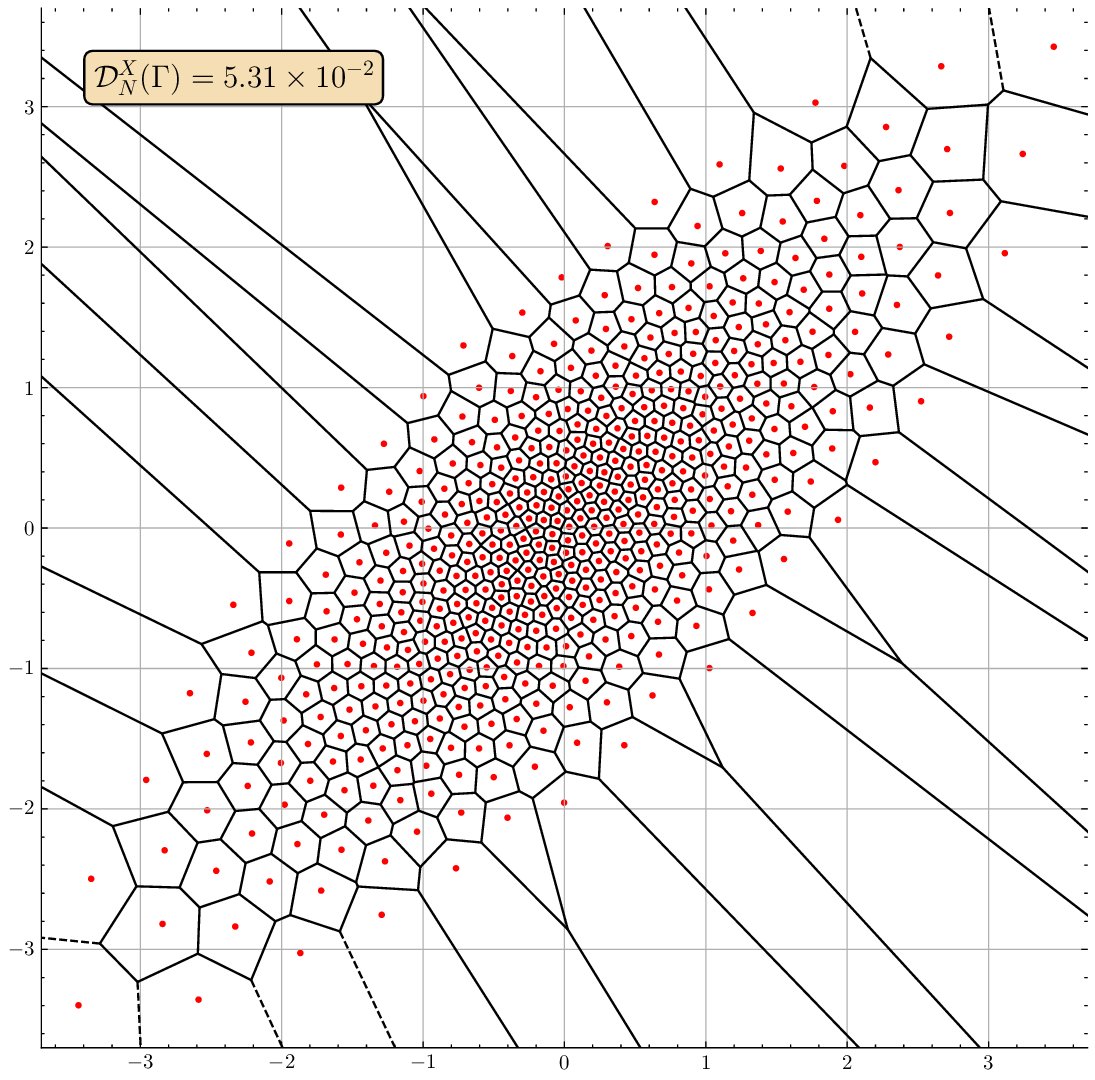}
    \caption{Voronoi quantization of the centered bivariate, with covariance of 0.8 between marginals encoded into $N = 500$ centroids. On the left : Voronoi tessellation of 500 random points from the joint distribution. On the right, Voronoi tessellation of 500 centroids obtained via optimal quantization (kmeans). }
    \label{fig:fig1}
\end{figure} 

In the context of vector quantization, cubature formulas are available to compute $\mathbb{E}[f(X)]$ where $X$ is a random vector on $\mathbb{R}^d$ and $f : \mathbb{R}^d \to \mathbb{R}$ a function \cite{pagès2018numerical}: 
\begin{align*}
    \mathbb{E}[f(X)] &\approx \mathbb{E}[f(\widehat{X})]
     = \sum_{i=1}^{N}f(x_{i})\mathbb{P}\left[\widehat{X} = x_{i}\right]
     = \sum_{i=1}^{N}f(x_{i})\mathbb{P}\left[X \in C_{i}\right]
\end{align*}
The accuracy of this estimate depends on the regularity of $f$. If $f$ is $L$-Lipschitz, then 
\begin{align*}
    \bigg\vert \mathbb{E}\left[f(X)\right] - \mathbb{E}\left[f\left(\widehat{X}\right)\right]\bigg\vert\leq L\sqrt{\mathcal{D}_N^X(\Gamma)}
\end{align*}
If $f \in \mathcal{C}^1(\mathbb{R}^d,\mathbb{R})$, $\nabla f$ is $L$-Lipschitz, and $\widehat{X}$ is stationary, then :
\begin{align*}
    \bigg\vert \mathbb{E}\left[f(X)\right] - \mathbb{E}\left[f\left(\widehat{X}\right)\right]\bigg\vert\leq \frac{L}{2}\mathcal{D}_N^X(\Gamma)
\end{align*}
Unlike Monte Carlo estimation, vector quantization estimation is deterministic. The estimator is therefore variance-free, but biased, whereas Monte Carlo estimation is unbiased, but with variance. In design of experiment (DOE), variance is preferred over bias. This leads to modification of estimation through vector quantization with the addition of controlled randomness, such as in stratification. Thus, one contribution of the paper is to introduce a stochastic version of vector quantization. This procedure will be used for the group of dependent inputs and associated to independent inputs in an LHS-style. This new methodology called quantization-based LHS is discussed in the following section.

\section{Quantization-based LHS}\label{section:3}
In this section, we introduce  different sampling strategies which account for dependency and such that the space-filling property is maintained after dimension reduction, so they can be used for screening purposes. For each sampling strategy, we study the estimation of $m = \mathbb{E}[f(X)]$, for $f : \mathbb{R}^d \to \mathbb{R}$.  
Three cases are studied:
\begin{itemize}
\item case 1 : $m = \mathbb{E}[f(X)]$ where $X = (X_{1}, \dots, X_d)$ with $d$ dependent components 
\item case 2 : $m = \mathbb{E}[f(X,Y)]$ where $X = (X_1, \dots, X_s)$ is composed of $s$ dependent components, $Y = (Y_{1}, \dots, Y_{d-s})$ is composed of $d-s$ independent components and $X$ and $Y$ are independent.
\item case 3 : $m = \mathbb{E}[f(X,Y)]$ where $X = (X_1, \dots, X_s)$ is composed of $s$ dependent components, $Y = (Y_{1}, \dots, Y_{d-s})$ is composed of $d-s$ dependent components and $X$ and $Y$ are independent.
\end{itemize}

\subsection{Random quantization (case 1)}\label{sec:RQ}
In this section $X$ is composed of a unique group of $d$ dependent components $X_{1}, \dots, X_d$. We introduce a stochastic version of vector quantization, called Random Quantization (RQ) to account for dependency. RQ is a stratification technique. The strata are the Voronoi cells obtained after quantization. A point is randomly drawn in each cell according to the probability distribution of X conditional on the cell. The Random Quantization sampling method is summarized in Algorithm \ref{alg:RQ}.

\begin{algorithm}
\caption{Random Quantization sampling}
\label{alg:RQ}
\begin{algorithmic}
\State Let $X \in \mathbb{R}^d$ be a random vector composed of $d$ dependent components.
\State Let $(x_{i1}, \dots, x_{id})_{i = 1, \dots, N}$ be a $N$-optimal quantizer of $X$. Let $(C_i)_{i = 1, \dots, N}$ be the associated Voronoi partition of  $\mathbb{R}^d$.
\For{$i = 1$ to $N$}
    \State Generate one random point $U_i =$ $(U_{i1}, \dots, U_{id})$ in the cell $C_i$ according to the probability distribution of $X$ conditioned on $C_i$, i.e. $U_i \sim \mathcal{L}(X \hspace{2pt}\vert \hspace{2pt} X\in C_i)$.
\EndFor
\State \textbf{return} $U = (U_1, \dots, U_N)$
\end{algorithmic}
\end{algorithm}

The use of randomized vector quantization for DOE has the advantage that it only requires knowledge of how to simulate in Voronoi cells. No knowledge of copulas or quantile functions is necessary. This allows for the entire distribution of $X$ to be explained using a finite number of support points, while perfectly accounting for input dependence.\\
\begin{definition}
    Let $(U_i)_{i=1 \dots N}$ a sample provided by Algorithm \ref{alg:RQ}. We define the following RQ estimator : 
    \begin{align}
    \mu_{RQ} := \widehat{\mathbb{E}[f(X)]}_{RQ} = \sum_{i=1}^{N}f(U_i)\mathbb{P}[X\in C_{i}].
\end{align}

\end{definition}

\begin{proposition}
    The estimator $\mu_{RQ}$ is unbiased, and its variance is given by : 
    \begin{align*}
        \mathrm{Var}\left(\mu_{RQ}\right) = \sum_{i=1}^{N}\mathbb{P}[X\in C_i]^2\mathrm{Var}\left(f(U_i)\right)
    \end{align*}
\end{proposition}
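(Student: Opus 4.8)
The plan is to establish unbiasedness through the law of total expectation, exploiting that the Voronoi cells $(C_i)_{i=1\dots N}$ tile $\mathbb{R}^d$, and to obtain the variance formula by exploiting the mutual independence of the draws $U_1, \dots, U_N$, which is guaranteed by the fact that the \textbf{for} loop in Algorithm \ref{alg:RQ} generates one point per cell independently of the others.

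First I would compute the expectation. Since the weights $\mathbb{P}[X\in C_i]$ are deterministic constants, linearity gives $\mathbb{E}[\mu_{RQ}] = \sum_{i=1}^N \mathbb{E}[f(U_i)]\,\mathbb{P}[X\in C_i]$. By construction $U_i \sim \mathcal{L}(X\mid X\in C_i)$, so $\mathbb{E}[f(U_i)] = \mathbb{E}[f(X)\mid X\in C_i]$, and each summand becomes
\begin{align*}
\mathbb{E}[f(X)\mid X\in C_i]\,\mathbb{P}[X\in C_i] = \mathbb{E}\left[f(X)\,\mathbbm{1}_{C_i}(X)\right].
\end{align*}
Summing over $i$ and using that $\sum_{i=1}^N \mathbbm{1}_{C_i}(X) = 1$ almost surely then yields $\mathbb{E}[\mu_{RQ}] = \mathbb{E}\left[f(X)\sum_{i=1}^N \mathbbm{1}_{C_i}(X)\right] = \mathbb{E}[f(X)]$, which is the claimed unbiasedness.

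For the variance, I would regard $\mu_{RQ}$ as a weighted sum of the independent random variables $f(U_1),\dots,f(U_N)$. Because the $U_i$ are drawn independently, the $f(U_i)$ are mutually independent, so all cross-covariances vanish; pulling each deterministic weight out of its variance term (hence the square) gives $\mathrm{Var}(\mu_{RQ}) = \sum_{i=1}^N \mathbb{P}[X\in C_i]^2\,\mathrm{Var}(f(U_i))$.

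The arithmetic is routine; the only step needing care is the measure-theoretic justification that $(C_i)_{i=1\dots N}$ acts as a genuine partition for $\mathbb{P}_X$, i.e. that the cell boundaries are $\mathbb{P}_X$-negligible so that the indicators sum to one almost surely and the conditional expectations recombine cleanly. I would settle this under the standing assumption that $X$ admits a density (as in the setting of Figure \ref{fig:fig1}): Voronoi boundaries are finite unions of lower-dimensional hyperplane pieces, hence Lebesgue-null and therefore $\mathbb{P}_X$-null, so the conditioning events are well defined and the partition identity holds. With that in hand both displays above are immediate.
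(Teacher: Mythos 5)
Your proposal is correct and follows essentially the same route as the paper: unbiasedness via the partition decomposition $\mathbb{E}[f(X)] = \sum_i \mathbb{P}[X\in C_i]\,\mathbb{E}[f(X)\mid X\in C_i] = \sum_i \mathbb{P}[X\in C_i]\,\mathbb{E}[f(U_i)]$ (you merely run the chain in the opposite direction), and the variance formula from the bilinear expansion with cross-covariances killed by the independence of the $U_i$. Your added remark on the $\mathbb{P}_X$-negligibility of the Voronoi cell boundaries is a tidy justification of a step the paper leaves implicit, but it does not change the argument.
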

\begin{proof}
    For the bias :
    \begin{align*}
        \mathbb{E}[f(X)] & = \sum_{i=1}^N\mathbb{E}\left[\mathbbm{1}_{X\in C_i}f(X)\right]\\
        & = \sum_{i=1}^{N}\mathbb{P}[X \in C_i]\mathbb{E}\left[f(X) \hspace{2pt}\vert \hspace{2pt} {X \in C_i}\right]\\
        & =  \sum_{i=1}^{N}\mathbb{P}[X \in C_i]\mathbb{E}\left[f(U_i)\right]\\
        & = \mathbb{E}\left[\mu_{RQ}\right]
    \end{align*}
    For the variance :
    \begin{align*}
         \mathrm{Var}\left(\mu_{RQ}\right) &= \sum_{i=1}^{N}\mathbb{P}[X\in C_i]^2 \mathrm{Var}\left(f(U_i)\right) + 2\sum_{1\leq i < j \leq N}\mathbb{P}[X\in C_i]\mathbb{P}[X\in C_j]\mathrm{Cov}\left(f(U_i), f(U_j)\right)\\
         & = \sum_{i=1}^{N}\mathbb{P}[X\in C_i]^2\mathrm{Var}\left(f(U_i)\right)
    \end{align*}
\end{proof}

We illustrate the behavior of $\mu_{RQ}$ on two toy examples. We first  consider the function $f : \mathbb{R} \to \mathbb{R}$ defined for all $x$ $\in \mathbb{R}$ by $f(x) = x^2$ and look for $\mathbb{E}[f(X)]$ with $X \sim \mathcal{N}(0,1)$. In the context of using a costly numerical model, the number of model evaluations is limited, therefore we choose low values of $N$ (10, 20, 50, and 100) and the behavior of $\mu_{RQ}$ is studied through 1000 repetitions. The results, summarized in Figure \ref{fig:fig2}, show that the estimator is unbiased and that its variance decreases as $N$ increases. $\mu_{RQ}$ is compared to Monte Carlo and LHS estimation. We recall that to obtain an LHS sample we first apply Algorithm \ref{alg:LHS} to produce a sample in $[0,1]$ to which the inverse of standard Gaussian c.d.f is applied. $\mu_{RQ}$ exhibits lower variance at low $N$, making it the most efficient method in this case.
\begin{figure}[htbp]
    \centering
    \includegraphics[width=\textwidth]{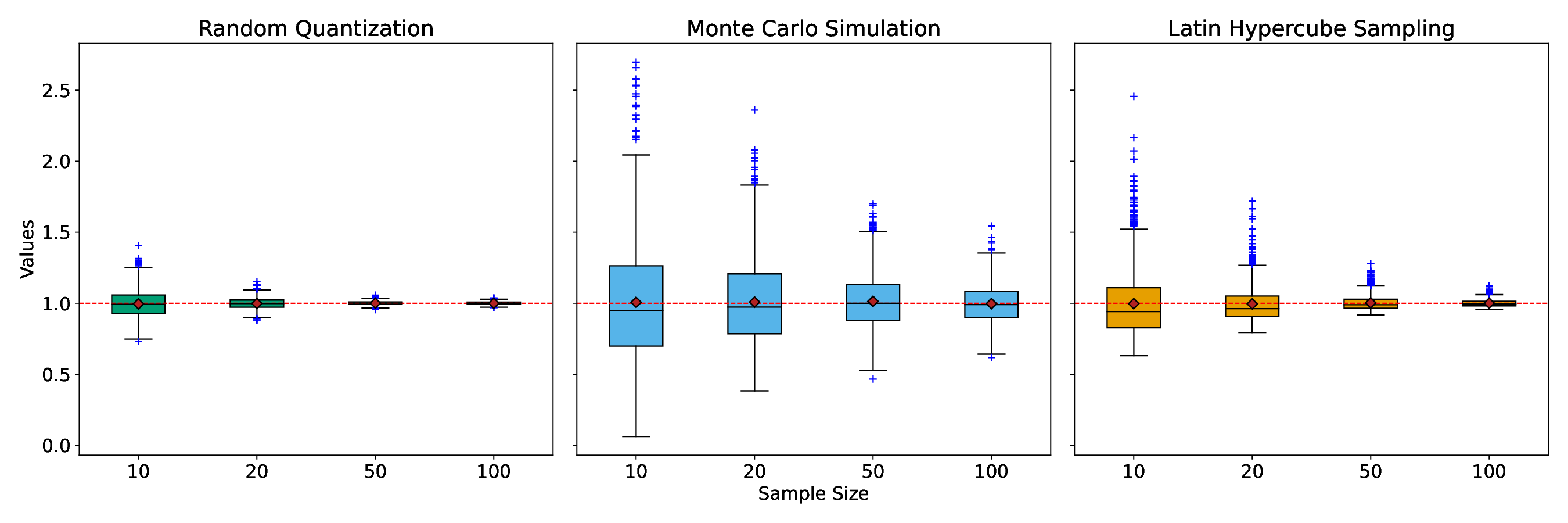}
    \caption{Estimation of $\mathbb{E}(X^2)$ where $X\sim\mathcal{N}(0,1)$ with sample size $N =10, 20, 50, 100$ and 1000 repetitions per $N$. The red dashed line is the theoretical value.}
    \label{fig:fig2}
\end{figure}
{We consider a second 2D example with correlation between marginals. Let $X = (X_1, X_2)$ be a centered Gaussian vector such that $cov(X_1, X_2) = 0.8$. To estimate $\mathbb{E}(X_1X_2)$, Monte Carlo, LHSD and Random Quantization were used. The results are shown in Figure \ref{fig:fig3}. A Gaussian copula with a correlation of 0.8 was used, and an analytical expression for the inverse of the conditional copula is known. All three estimators are unbiased, and the proposed estimator, $\mu_{RQ}$, has minimal variance.
\begin{figure}[htbp]
    \centering
    \includegraphics[width=\textwidth]{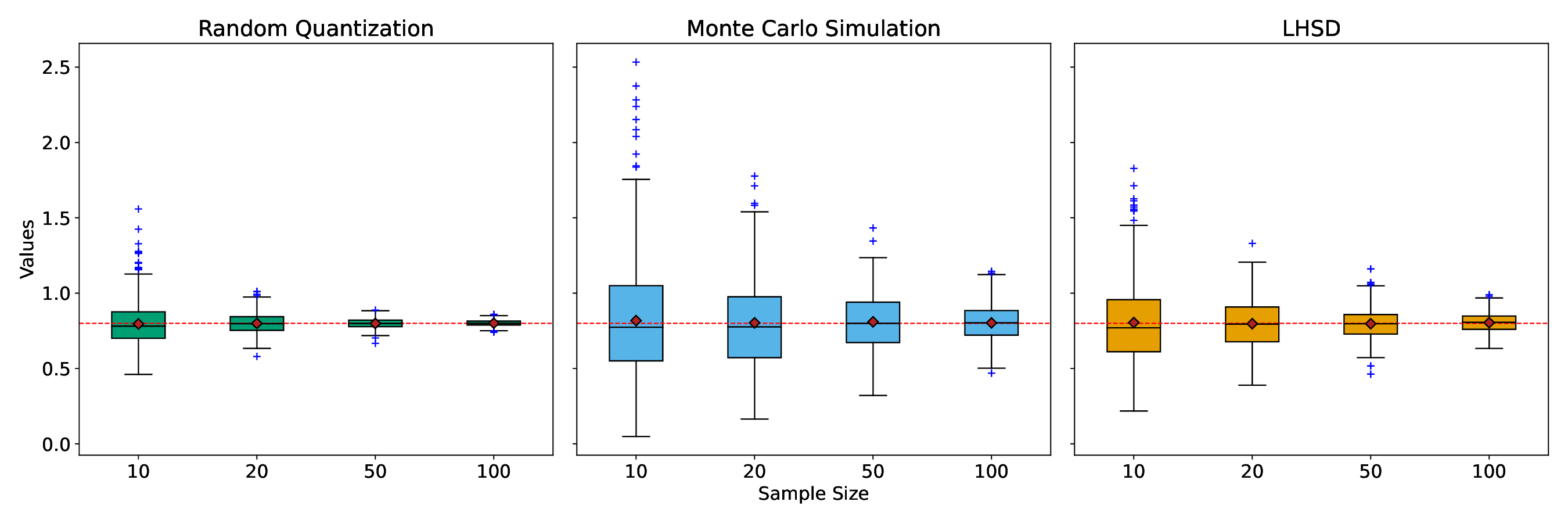}
    \caption{Estimation of $\mathbb{E}(X_1X_2)$ where $X=(X_1, X_2)$ is a centered Gaussian vector with covariance 0.8 with sample size $N \in \{10, 20, 50, 100\}$ and 500 repetitions per $N$. The red dashed line is the theoretical value of 0.8.}
    \label{fig:fig3}
\end{figure}
}

\subsection{Quantization-based LHS (case 2)}
Consider $X = (X_1, \dots, X_s)$, $s\in\mathbb{N}^*$, a random vector with dependent components and $Y=(Y_1, \dots, Y_{d-s})$, a group of independent random variables. $X$ and $Y$ are independent. We want to summarize the theoretical law of $(X,Y)$ though an empirical sample of $N$ points in $\mathbb{R}^d$ which preserves space-filling properties after dimension reduction. To do so, we stratify $X$ using the Random Quantification procedure introduced in section \ref{sec:RQ}. In the same way, we stratify $Y$ using an LHS sample. The idea is then to associate to each stratum of $X$ a stratum of $Y$ using a random permutation $\pi$. The sampling scheme is described in Algorithm \ref{alg:RQLHS}.

\begin{algorithm}
\caption{Quantization-based LHS}
\label{alg:RQLHS}
\begin{algorithmic}
\State Let $X \in \mathbb{R}^s$ be a random vector composed of $s$ dependent components.
\State Apply Algorithm \ref{alg:RQ} to provide a RQ sample $U = (U_1, \dots, U_N)$ of $X$.
\State Let $Y \in \mathbb{R}^{d-s}$ be a random vector composed of $d-s$ independent components.
\State Apply Algorithm \ref{alg:LHS} to provide an LHS sample $V = (V_1, \dots, V_N)$ of $Y$.
\State Let $\pi$ be a random permutation of $\{1,\dots, N\}$ in $\{1,\dots,N\}$.
\State \textbf{return} $((U_1,V_{\pi(1)}) , \dots, (U_N,V_{\pi(N)}))$
\end{algorithmic}
\end{algorithm}

\begin{definition}
Let $((U_i,V_{\pi(i)}))_{i=1 \dots N}$ a sample provided by Algorithm \ref{alg:RQLHS} where $U_i \sim \mathcal{L}(X | X\in C_i)$ and  $(C_i)_{i=1,\dots, N}$ is the Voronoi tessellation associated to the quantization of $X$. We define the following Quantization-based LHS estimator : 
   \begin{align}
    \mu_{QLHS} := \widehat{\mathbb{E}[f(X, Y)]}_{QLHS} = \sum_{i=1}^{N}\mathbb{P}[X\in C_{i}]f(U_{i}, V_{\pi(i)})
\end{align}
\end{definition}

\begin{proposition}
    The estimator $\mu_{QLHS}$ is unbiased.
\end{proposition}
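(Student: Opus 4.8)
The plan is to compute $\mathbb{E}[\mu_{QLHS}]$ directly by exploiting the three independent sources of randomness in Algorithm \ref{alg:RQLHS}: the RQ sample $U$, the LHS sample $V$, and the permutation $\pi$. By linearity of expectation,
\[
\mathbb{E}[\mu_{QLHS}] = \sum_{i=1}^N \mathbb{P}[X \in C_i]\, \mathbb{E}\!\left[f(U_i, V_{\pi(i)})\right],
\]
so it suffices to identify each term $\mathbb{E}[f(U_i, V_{\pi(i)})]$ with the conditional expectation $\mathbb{E}[f(X,Y) \mid X \in C_i]$. Summing these against the weights $\mathbb{P}[X \in C_i]$ and using that the cells $(C_i)_{i=1,\dots,N}$ partition $\mathbb{R}^s$ then recovers $\mathbb{E}[f(X,Y)]$ by the tower property, exactly as in the bias computation already carried out for $\mu_{RQ}$.

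The key step is to pin down the law of the pair $(U_i, V_{\pi(i)})$. First I would argue that $V_{\pi(i)}$ has the same distribution as $Y$: this rests on the marginal property of LHS, namely that each individual point $V_k$ of the LHS sample is marginally distributed as $Y$ (the very fact underlying the unbiasedness of $\mu_{LHS}$ recalled in Section \ref{section:2}). Since $\pi$ is independent of $V$ and uniform on $\{1,\dots,N\}$, the variable $V_{\pi(i)}$ is a uniform mixture of the $V_k$ and is therefore again distributed as $Y$. Next, because $\pi$ and $V$ are jointly independent of $U$, the variable $V_{\pi(i)}$ is independent of $U_i$. Combining these two observations with $U_i \sim \mathcal{L}(X \mid X \in C_i)$ yields
\[
\mathbb{E}\!\left[f(U_i, V_{\pi(i)})\right] = \int\!\!\int f(u,v)\, \mathrm{d}\mathbb{P}_{X \mid X \in C_i}(u)\, \mathrm{d}\mathbb{P}_Y(v).
\]

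Finally I would recognize the right-hand side as $\mathbb{E}[f(X,Y) \mid X \in C_i]$: since $X$ and $Y$ are independent, conditioning on the event $\{X \in C_i\}$ reshapes only the law of $X$ while leaving $Y \sim \mathbb{P}_Y$ untouched, so the double integral is precisely $\mathbb{E}[f(X,Y)\mid X \in C_i]$. Substituting back and summing the weighted sum via the law of total expectation completes the argument.

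I expect the main obstacle to be conceptual rather than computational: one must be careful that it is the marginal (one-point) distribution of the LHS sample that is used here, not the joint law of $V$, and that the random permutation—although essential for the space-filling and variance properties of the design—is irrelevant to the bias precisely because every $V_k$ already carries the correct marginal $Y$. A secondary point requiring care is the joint independence of $(U, V, \pi)$, which legitimizes both the factorization of the integral over $u$ and $v$ and the independence of $V_{\pi(i)}$ from $U_i$; these facts should be stated explicitly, since they are exactly what makes the permutation harmless for unbiasedness.
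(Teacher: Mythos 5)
Your proof is correct and follows essentially the same route as the paper: the paper also reduces to $\sum_i p_i\,\mathbb{E}[f(U_i,V_{\pi(i)})]$ and then conditions on $\pi$, summing over all $N!$ permutations and counting $\mathrm{Card}\{a\in S_N : a(i)=j\}=(N-1)!$ --- which is exactly your observation that $\pi(i)$ is uniform on $\{1,\dots,N\}$ packaged as an explicit combinatorial computation. If anything, your version is slightly more complete, since you spell out the final step (each $V_k$, hence $V_{\pi(i)}$, is marginally distributed as $Y$ and independent of $U_i$) that the paper leaves implicit when passing from $\frac{1}{N}\sum_{i,j} p_i\,\mathbb{E}[f(U_i,V_j)]$ to $\mathbb{E}[f(X,Y)]$.
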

In the following, for $i=1, \dots,N$, we define $p_i := \mathbb{P}[X\in C_i]$.
\begin{proof}
    For $N\in\mathbb{N}^*$, we denote by $S_N$ the permutation group of order $N!$. 
    \begin{align*}
        \mathbb{E}[\mu_{QLHS}] & = \sum_{i=1}^N p_i\mathbb{E}\left[f(U_i, V_{\pi(i)})\right]\\
        & = \sum_{i=1}^N p_i \mathbb{E}\left[\mathbb{E}\left[f(U_i, V_{\pi(i)}) \right]\hspace{2pt}\vert\hspace{2pt} \pi\right]\\
        &= \sum_{i=1}^N p_i \sum_{a\in S_N}\mathbb{P}[\pi = a]\mathbb{E}\left[f(U_i,V_{\pi(i)})\mid \pi = a\right]\\
        & = \sum_{i=1}^N p_i \sum_{a\in S_N}\frac{1}{N !}\mathbb{E}\left[f(U_i,V_{a(i)})\right]\\
        & = \sum_{i=1}^N \sum_{j=1}^N \sum_{\substack{a\in S_N\\ a(i)=j}} \frac{1}{N!}p_i \mathbb{E}\left[f(U_i, V_j)\right]\\
        & = \sum_{i=1}^N \sum_{j=1}^N \frac{(N-1)!}{N!}p_i \mathbb{E}\left[f(U_i, V_j)\right]\\
        & = \frac{N(N-1)!}{N!}\sum_{i=1}^N \sum_{j=1}^N \frac{1}{N}p_i \mathbb{E}[f(U_i, V_j)]\\
        & = \mathbb{E}\left[f(X, Y)\right]
    \end{align*}
    Because $\left\{a\in S_N \hspace{2pt}\vert\hspace{2pt} a(i)=j\right\} \cong S_{N-1}$. Hence, $\mathrm{Card}\left(\left\{a\in S_N \hspace{2pt}\vert\hspace{2pt} a(i)=j\right\}\right) = (N-1)!$.
\end{proof}
We study the behavior of $\mu_{QLHS}$ through the function $f : \mathbb{R}^2 \to \mathbb{R}$ defined for all $(x,y) \in \mathbb{R}^2$ by $f(x,y) = x^2y$. The problem is to estimate $\mathbb{E}[f(X,Y)]$ where $X \sim \mathcal{N}(0,1)$ and $Y \sim \mathcal{U}\left([0,1]\right)$. The results are summarized in Figure \ref{fig:fig4}, leading to the same conclusion as for $\mu_{RQ}$. Specifically, the estimator is unbiased and has decreasing variance, resulting in better performance than the typical Monte Carlo and LHS techniques. {By estimating $\mathbb{E}((X_1+X_2)^2Y)$ through the addition of dependence in $X = (X_1, X_2)$ from a centered bivariate Gaussian vector, we have obtained results shown in Figure \ref{fig:fig5}. The LHSD method is parameterized based on the Gaussian copula with parameter $\rho = 0.8$. It is observed that $\mu_{QLHS}$ is unbiased with lower variance than the MC and LHSD methods for all sample sizes, and therefore offers the best performance.}
\begin{figure}[htbp]
    \centering
    \includegraphics[width=\textwidth]{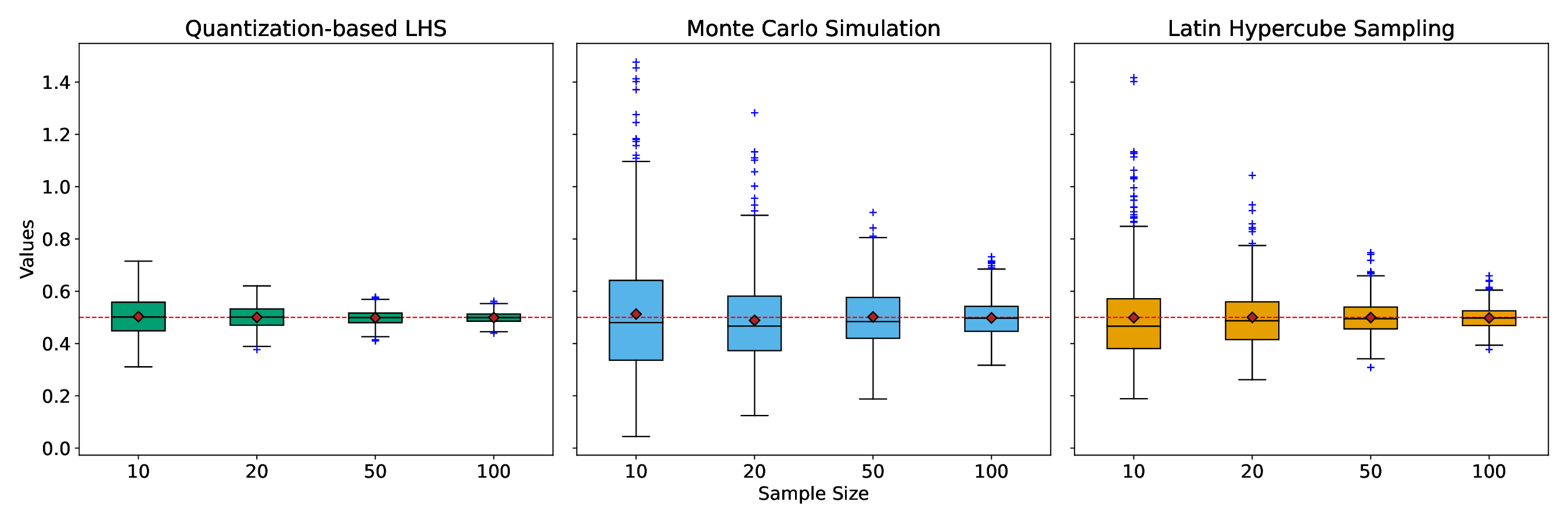}
    \caption{Estimation of $\mathbb{E}(X^2Y)$ where $X\sim\mathcal{N}(0,1)$ and $Y\sim \mathcal{U}([0,1])$ with sample size $N \in \{10, 20, 50, 100\}$ and 1000 repetitions per $N$. The red dashed line is the theoretical value.}
    \label{fig:fig4}
\end{figure}
\begin{figure}[htbp]
    \centering
    \includegraphics[width=\textwidth]{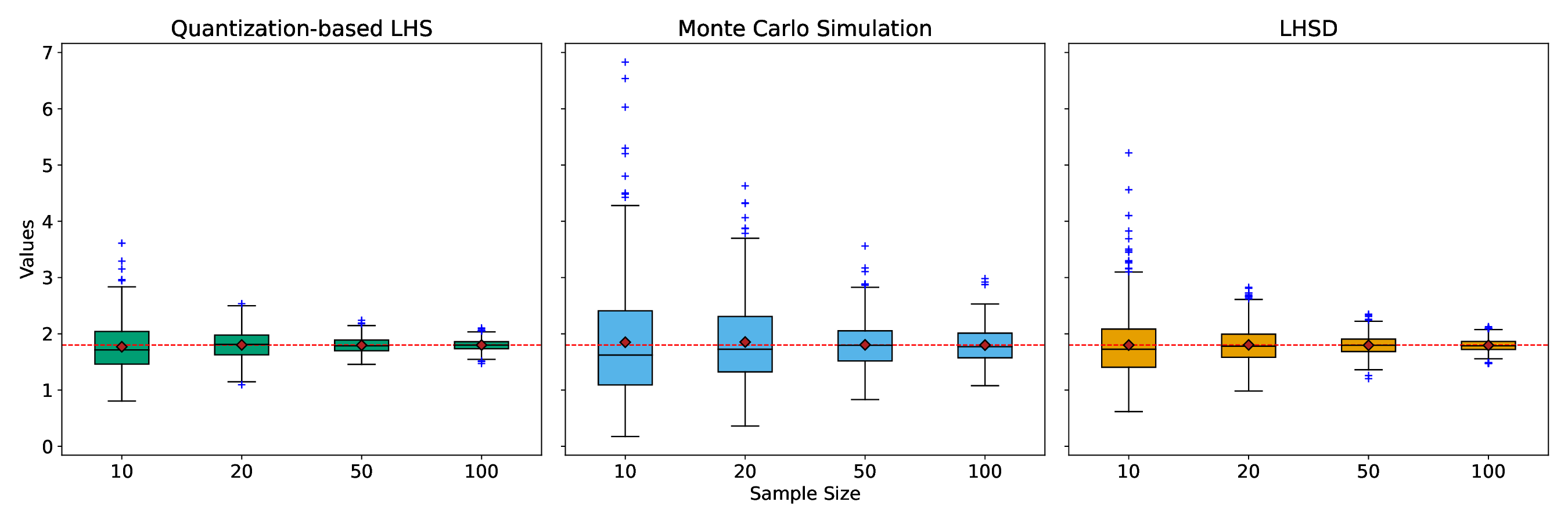}
    \caption{Estimation of $\mathbb{E}((X_1 + X_2)^2Y)$ where $X = (X_1, X_2)$ is a centered bivariate Gaussian vector with covariance $cov(X_1, X_2) = 0.8$ and $Y\sim \mathcal{U}([0,1])$ with sample size $N \in \{10, 20, 50, 100\}$ and 500 repetitions per $N$. The red dashed line is the theoretical value.}
    \label{fig:fig5}
\end{figure}
\subsection{Double Quantization-based LHS}
Considering two independent random vectors $X \in \mathbb{R}^s$ and $Y \in \mathbb{R}^{d-s}$, each composed of several dependent variables.  We want to compute $\mathbb{E}[f(X, Y)]$ where $f : \mathbb{R}^2 \to \mathbb{R}$ is a continuous function. Let consider two  samples $U = (U_1, \dots, U_N)$ and $V = (V_1, \dots, V_N)$ of $X$ and $Y$ obtained via Random Quantization from Algorithm \ref{alg:RQ}. These two samples preserve the correlation inside each group of dependent variables. In the same manner as in the previous section, the idea is to associate to each stratum of $X$ a stratum of $Y$ using a random permutation $\pi$. The sampling scheme is described in Algorithm \ref{alg:DRQLHS}.\\
\begin{algorithm}
\caption{Double Quantization-based LHS}
\label{alg:DRQLHS}
\begin{algorithmic}
\State Let $X \in \mathbb{R}^s$ be a random vector composed of $s$ dependent components.
\State Apply Algorithm \ref{alg:RQ} to provide a RQ sample $U = (U_1, \dots, U_N)$ of $X$.
\State Let $Y \in \mathbb{R}^{d-s}$ be a random vector composed of $d-s$ dependent components.
\State Apply Algorithm \ref{alg:RQ} to provide a RQ sample $V = (V_1, \dots, V_N)$ of $Y$.
\State Let $\pi$ be a random permutation of $\{1,\dots, N\}$ in $\{1,\dots,N\}$.
\State \textbf{return} $((U_1,V_{\pi(1)}) , \dots, (U_N,V_{\pi(N)}))$
\end{algorithmic}
\end{algorithm}
\begin{definition}
Let $((U_i,V_{\pi(i)}))_{i=1 \dots N}$ a sample provided by Algorithm \ref{alg:DRQLHS} where 
\begin{itemize}
    \item $U_i \sim \mathcal{L}(X | X\in C^X_i)$ and  $(C^X_i)_{i=1,\dots, N}$ is the Voronoi tessellation associated to the quantization of $X$
    \item $V_j \sim \mathcal{L}(Y | Y\in C^Y_j)$ and  $(C^Y_j)_{j=1,\dots, N}$ is the Voronoi tessellation associated to the quantization of $Y$
\end{itemize}
We define the following Q2LHS estimator : 
\begin{align}
\mu_{Q2LHS} =:\frac{1}{\sum_{i=1}^N p_iq_{\pi(i)}}\sum_{i=1}^N p_iq_{\pi(i)} f\left(U_i,V_{\pi(i)}\right)
\end{align}
where $\forall 1 \leq i \leq N$, $p_i =\mathbb{P}(X\in C^X_i) $ and $\forall 1 \leq j \leq N$, $q_j =\mathbb{P}(Y\in C^Y_j)$.
\end{definition}

Figure \ref{fig:fig6} shows 1000 iterations of the estimate of $\mathbb{E}(XY^2+Y^2)$ where $X\sim\mathcal{LN}(0,1)$ and $Y\sim\mathcal{N}(0,1)$ with its confidence interval. It is observed that the Q2LHS estimator is asymptotically unbiased.
Figure \ref{fig:fig7} compares the Q2LHS estimator with Monte Carlo and LHSD with $N \in \{10,20,50, 100\}$. The estimator is asymptotically unbiased and performs better than conventional Monte Carlo and LHS estimators, with smaller variance and fewer outliers.
\begin{figure}[htbp]
    \centering
    \includegraphics[width=0.6\textwidth]{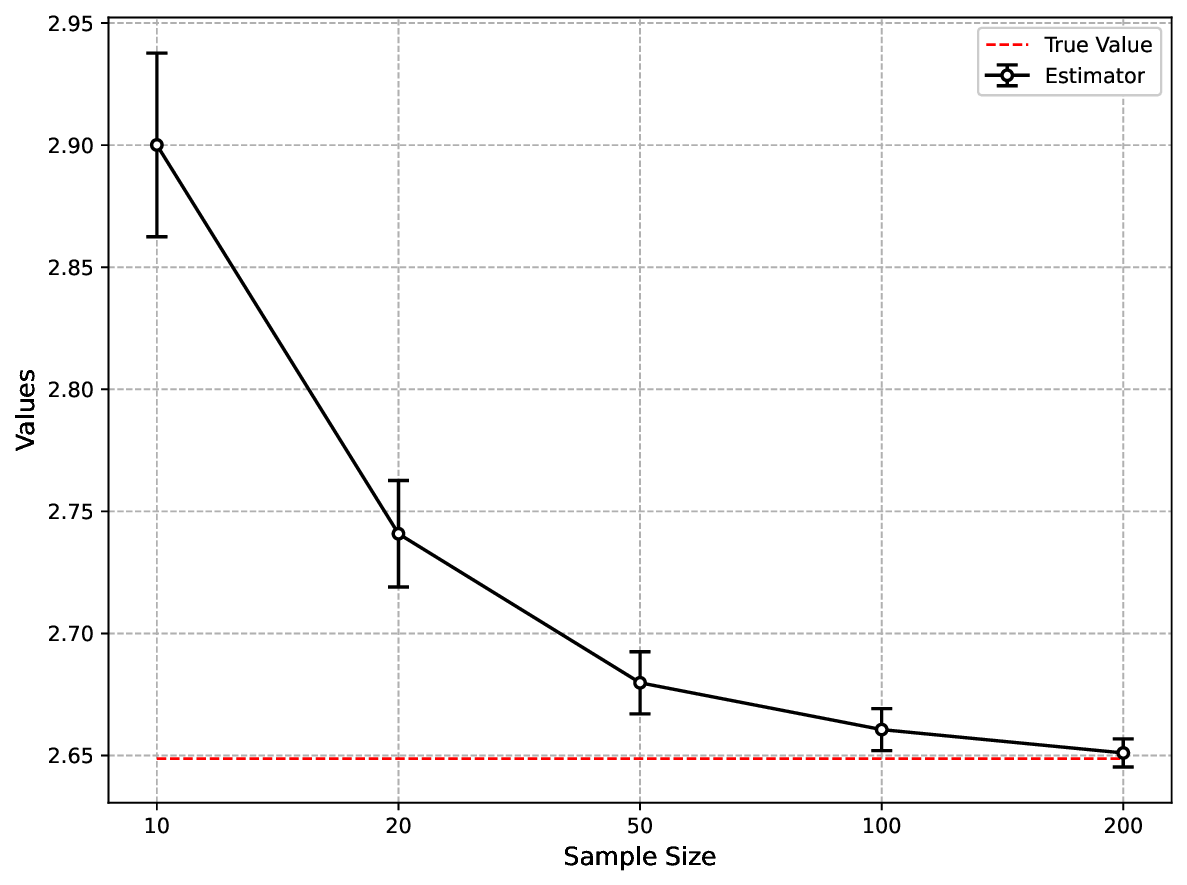}
    \caption{Estimation of $\mathbb{E}(\mu_{Q2LHS})$ with the sample size when estimating  $\mathbb{E}(XY^2+Y^2)$ where $X\sim\mathcal{LN}(0,1)$ and $Y\sim\mathcal{N}(0,1)$. 1000 repetitions of the estimation are performed with $N \in \{10,20,50, 100\}$. The 95\% confidence intervals are shown.}
    \label{fig:fig6}
\end{figure}

\begin{figure}[htbp]
    \centering
    \includegraphics[width=\textwidth]{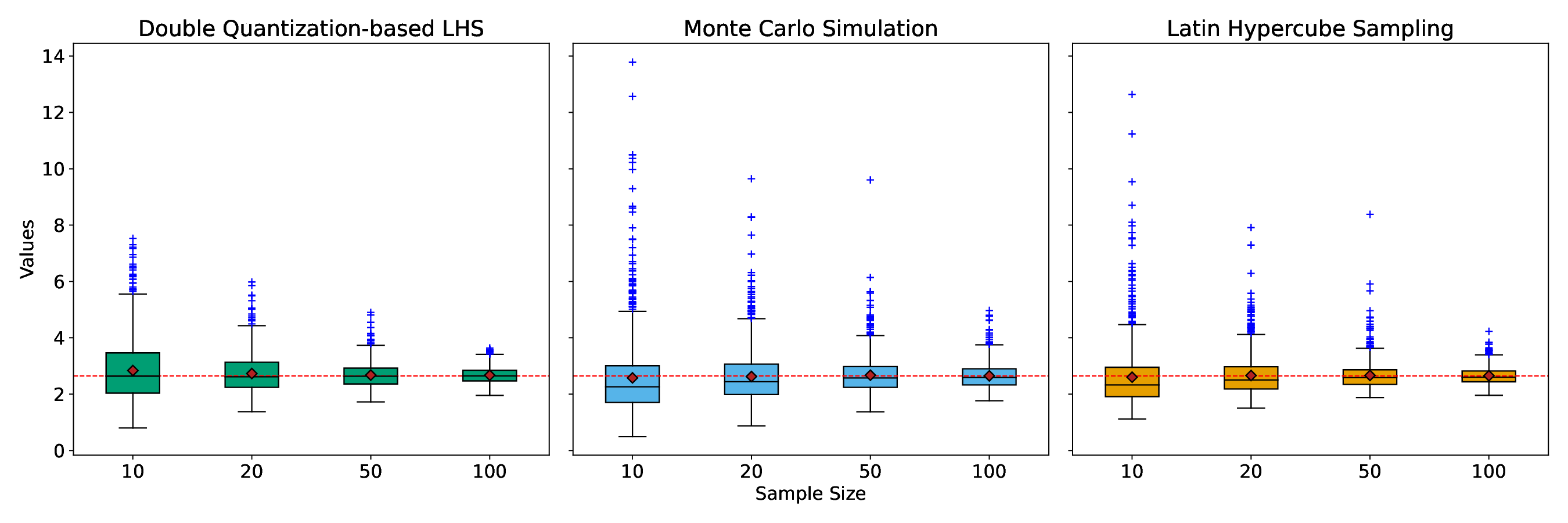}
    \caption{Estimation of $\mathbb{E}(XY^2 + Y^2)$ where $X\sim\mathcal{LN}(0,1)$ and $Y\sim\mathcal{N}(0,1)$ with sample size $N =10, 20, 50, 100$ and 1000 repetitions per $N$. The red dashed line is the theoretical value i.e $e^{0.5} + 1\approx 2.64$.}
    \label{fig:fig7}
\end{figure}

\section{Application to HSIC : kernel-based sensitivity analysis}\label{section:4}
When building a surrogate model (or metamodel) for a numerical computation code, it is useful to carry out a preliminary selection of the most influential input variables.
This step is called screening and is crucial when the number of input parameters is large. It allows tuning the metamodel in a limited dimension space and thus requires a reduced number of costly evaluations of computer experiments. The goal of this section is to show how Quantization-based LHS allows estimating Sensitivity Analysis HSIC measures taking dependence into account.

\subsection{RKHS and HSIC measures}
Consider a numerical model $\mathcal{M}$ such that for $X\in\mathbb{R}^d$, $\mathcal{M}(X)\in\mathbb{R}$. Let $K \neq \emptyset$ and $\mathcal{H}$ be a Hilbert space of real functions in $K$.
\begin{definition}[Reproducing kernel]
    A kernel $k : K\times K \to \mathbb{R}$ of $\mathcal{H}$ is reproducing if we have : $\forall x\in K,\quad k(\cdot, x)\in\mathcal{H}$ and if it verifies the reproducing property :
    \begin{align*}
        \forall f\in\mathcal{H},\quad x\in K, \quad f(x) = \innerproduct{f}{k(\cdot,x)}
    \end{align*}
    The space $\mathcal{H}$ is said to be a Reproducing Kernel Hilbert Space (RKHS) if, for all $x\in K$, the Dirac function $\delta_x : \mathcal{H}\to K$ defined as 
    \begin{align*}
        \forall f\in\mathcal{H},\quad \delta_x(f) = f(x)
    \end{align*}
    is continuous.
\end{definition}
For more details on RKHS, the reader is referred to \cite{SVM}.
\begin{definition}[Kernel embedding]
    Let $\mathcal{M}^+_1$ the space of probability measures on $K$. Consider $\mathcal{H}$ the RKHS induced by a kernel $k : K \times K \to \mathbb{R}$. We define the kernel mean embedding as 
    \begin{align*}
        \mu : \begin{cases}
\mathcal{M}^+_1 \to \mathcal{H}\\
\mathbb{P} \mapsto \int k(\cdot, x)\mathrm{d}\mathbb{P}(x)
\end{cases}
    \end{align*}
\end{definition}
Consider $\mathbf{X} =(X_1, \dots ,X_d)$ a random vector defined on $\mathcal{X} = \mathcal{X}_1\times\dots\times\mathcal{X}_d$ and $Y$ a scalar output (which can be extended to vector or functional outputs) where $Y := \mathcal{M}(X)$ and $M : \mathcal{X} \to \mathbb{R}$ is a black-box numerical model. For a given set of indices $A\subset\{1,\dots, d\}$, we define the random vector $X_A$ as $(X_i)_{i\in A}$ on a probability space $(\Omega, \mathcal{A},\mathbb{P})$ with distribution $\mathbb{P}_{X_A}$.
\begin{definition}[HSIC measure]
Let $A \subset \{1, \dots, d\}$. Let $\mathcal{H}$ be the RKHS of functions of $\mathcal{X}_A$ in $\mathbb{R}$ with kernel $k:=\bigotimes_{i\in A} k_i$, and let $\mathcal{F}$ be the RKHS of functions of $\mathcal{Y}$ in $\mathbb{R}$ with kernel $k_Y$. The Hilbert-Schmidt independence criterion (HSIC) measures the distance between the embeddings of two distributions : the joint probability distribution $\mathbb{P}_{(X_A,Y)}$ of $(X_A, Y)$ and the product of the marginal probability distributions $\mathbb{P}_{X_A}$ and $\mathbb{P}_Y$ and is given by:
\begin{align*}
    \mathrm{HSIC}(X_A, Y) = \mathrm{MMD}\left(\mathbb{P}_{(X_A,Y)}, \mathbb{P}_{X_A}\otimes \mathbb{P}_Y\right)^2 = \left\Vert \mu\left(\mathbb{P}_{(X_A,Y)}\right) - \mu\left(\mathbb{P}_{X_A}\right)\otimes\mu\left(\mathbb{P}_Y\right)\right\Vert^2_{\mathcal{H}\times\mathcal{F}}
\end{align*}
where $\mu\left(\mathbb{P}_{(X_A,Y)}\right) = \mathbb{E}\left[k(X_A, \cdot)k_Y(Y, \cdot)\right]$ is the kernel mean embedding of the joint distribution, and $\mu\left(\mathbb{P}_{X_A}\right)\otimes\mu\left(\mathbb{P}_Y\right) = \mathbb{E}\left[k(X_A,\cdot)\right]\mathbb{E}\left[k_Y(Y,\cdot)\right]$ is the kernel mean embedding of the product of marginal distributions.
\end{definition}
We can note that HSIC is null if $X_A$ and $Y$ are independent and positive otherwise. It then allows selecting which input is influencing on the output. Besides, HSIC can be calculated for groups of random variables indexed by $A$. This is useful for groups of dependent variables. An illustrated example of the HSIC measure with RKHS is given in Figure \ref{fig:fig8}. The reproducibility property of RKHS allows us to derive an expression based exclusively on the expectations of the kernels, as summarized in the following proposition:
\begin{figure}[htbp]
    \centering
    \includegraphics[width=0.5\textwidth]{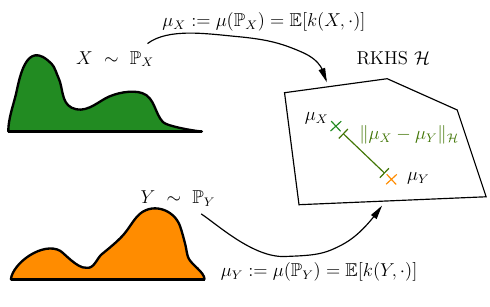}
    \caption{Illustration of the embedding of two probability distributions in the RKHS in order to compare them.}
    \label{fig:fig8}
\end{figure}
\begin{proposition}
    Given an i.i.d copy $(X_A', Y')$ of $(X_A, Y)$ such that $\mathbb{E}\left[ k(X_A, X_A')\right] < +\infty$ and $\mathbb{E}\left[ k_Y(Y, Y')\right] < +\infty$, we have :
    \begin{multline*}
        \mathrm{HSIC}(X_A,Y) = \mathbb{E} \left[ k(X_A, X'_A)k_Y(Y,Y')\right] + \mathbb{E}\left[ k(X_A, X'_A)\right]\mathbb{E}\left[ k_Y(Y, Y')\right] \\- 2\mathbb{E}\left[ \mathbb{E}\left[k(X_A, X'_A) \hspace{1pt}\vert\hspace{1pt} X_A\right] \mathbb{E}\left[ k_Y(Y,Y')\hspace{1pt} \vert\hspace{1pt} Y\right] \right]\end{multline*}
\end{proposition}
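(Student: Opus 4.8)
The plan is to expand the squared MMD norm defining $\mathrm{HSIC}(X_A, Y)$ by bilinearity of the inner product on the tensor-product RKHS $\mathcal{H}\otimes\mathcal{F}$, and then collapse each resulting inner product of kernel mean embeddings into an expectation of kernel evaluations via the reproducing property. Concretely, I would set $a := \mu\left(\mathbb{P}_{(X_A,Y)}\right) = \mathbb{E}[k(X_A,\cdot)k_Y(Y,\cdot)]$ and $b := \mu\left(\mathbb{P}_{X_A}\right)\otimes\mu\left(\mathbb{P}_Y\right) = \mathbb{E}[k(X_A,\cdot)]\otimes\mathbb{E}[k_Y(Y,\cdot)]$, so that $\mathrm{HSIC}(X_A,Y) = \lVert a - b\rVert^2 = \innerproduct{a}{a} - 2\innerproduct{a}{b} + \innerproduct{b}{b}$. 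The three terms in the statement will correspond exactly to these three inner products.

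The key computational tool is that, on the tensor-product space, the reproducing property gives $\innerproduct{k(x,\cdot)\otimes k_Y(y,\cdot)}{k(x',\cdot)\otimes k_Y(y',\cdot)} = k(x,x')k_Y(y,y')$. Since each embedding is a Bochner integral in the Hilbert space, the inner product commutes with the expectations, which lets me pull the defining expectations outside. Introducing an i.i.d.\ copy $(X_A', Y')$, the diagonal term becomes $\innerproduct{a}{a} = \mathbb{E}[k(X_A,X_A')k_Y(Y,Y')]$, where $(X_A',Y')$ is coupled according to the \emph{joint} law; the product-marginal term factorizes as $\innerproduct{b}{b} = \innerproduct{\mu(\mathbb{P}_{X_A})}{\mu(\mathbb{P}_{X_A})}\,\innerproduct{\mu(\mathbb{P}_Y)}{\mu(\mathbb{P}_Y)} = \mathbb{E}[k(X_A,X_A')]\,\mathbb{E}[k_Y(Y,Y')]$, now with $X_A \indep X_A'$ and $Y \indep Y'$ drawn from the marginals. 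These recover the first and second terms of the claimed identity.

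The cross term $\innerproduct{a}{b}$ is the one requiring genuine care, and I expect it to be the main obstacle: here $X_A$ and $Y$ remain coupled through the joint distribution, whereas the primed variables originate from the product of marginals and are independent of one another. Writing it out, $\innerproduct{a}{b} = \mathbb{E}_{(X_A,Y)}\,\mathbb{E}_{X_A'}\,\mathbb{E}_{Y'}[k(X_A,X_A')k_Y(Y,Y')]$, and integrating out the primed variables \emph{first} yields $\mathbb{E}\!\left[\mathbb{E}[k(X_A,X_A')\mid X_A]\,\mathbb{E}[k_Y(Y,Y')\mid Y]\right]$, which is precisely the third term (carrying the factor $-2$). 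Collecting the three contributions reproduces the stated formula. The only subtlety beyond this bookkeeping is justifying the interchange of expectation and inner product, which follows from Bochner integrability guaranteed by the hypotheses $\mathbb{E}[k(X_A,X_A')] < +\infty$ and $\mathbb{E}[k_Y(Y,Y')] < +\infty$ together with Cauchy–Schwarz for the joint term; this step is routine and I would not dwell on it.
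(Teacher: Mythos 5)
Your proposal is correct and is exactly the argument the paper has in mind: the paper states this proposition without writing out a proof, merely noting that it follows from the reproducing property (and deferring to the original HSIC reference), and the intended derivation is precisely your expansion $\lVert a-b\rVert^2 = \innerproduct{a}{a} - 2\innerproduct{a}{b} + \innerproduct{b}{b}$ with each inner product collapsed via $\innerproduct{k(x,\cdot)\otimes k_Y(y,\cdot)}{k(x',\cdot)\otimes k_Y(y',\cdot)} = k(x,x')k_Y(y,y')$ and Bochner integrability to interchange expectation and inner product. Your identification of which coupling of $(X_A',Y')$ (joint versus product of marginals) appears in each of the three terms, and the observation that integrating out the primed variables first produces the conditional-expectation form of the cross term, is the right bookkeeping and fills in exactly what the paper leaves implicit.
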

Using this simplified expression for the HSICs, we can easily estimate them using standard methods such as Monte Carlo. Unbiased (U-statistic) and biased (but asymptotically unbiased, V-statistic) estimators are introduced in \cite{arthur2005, songHSIC}. V-statistics are commonly used. To simplify notation, we will assume that $X:=X_A$.
\subsection{Estimation based on SRS}
Given two i.i.d sample $(X_i, Y_i)_{1\leq 1\leq N}$ and $(X'_i, Y_i')_{1\leq 1\leq N}$ of $(X, Y)$. The V-statistic is given by :
\begin{align*}
    \widehat{\mathrm{HSIC}}(X, Y) = \frac{1}{N^2}\sum_{i,j=1}^N k_{X}(X_i, X'_j)k_Y(Y_i, Y'_j) + \frac{1}{N^4}\sum_{i,j=1}^N k_X(X_i, X_j')\sum_{i,j=1}^N K_Y(Y_i, Y_j') \\-\frac{2}{N}\sum_{i=1}^N\left(\frac{1}{N}\sum_{j=1}^N k_X(X_i, X_j')\frac{1}{N}\sum_{j=1}^N k_Y(Y_i, Y_j')\right)
\end{align*}
Or similarly, as introduced in \cite{arthur2005} :
\begin{proposition}
    \begin{align}\label{eq:HSICtrace}
    \widehat{\mathrm{HSIC}}(X, Y) = \frac{1}{N^2}\mathrm{tr}\left(L_X H L H\right)
\end{align}
with
\begin{itemize}
    \item $L_X$ and $L$ are the Gram matrices defined as
    \[L_X = \left( k\left(X_i, X_j\right)\right)_{1 \leq i,j \leq N} \hspace{4pt} \text{and} \hspace{4pt} L = \left( k_Y\left(Y_i, Y_j\right)\right)_{1 \leq i,j \leq N}\]
    \item $H = \left(\delta_{ij} - \frac{1}{N}\right)_{1 \leq i,j \leq N}$ where $\delta_{ij}$ is the Kronecker delta.
\end{itemize}
\end{proposition}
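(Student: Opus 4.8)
The plan is to pass to matrix notation and expand the centering matrix. Writing $H = I_N - \frac{1}{N}\mathbf{1}\mathbf{1}^{\top}$, where $\mathbf{1}$ denotes the all-ones vector in $\mathbb{R}^N$ (so that indeed $(H)_{ij} = \delta_{ij} - 1/N$), the first step is to substitute this into $\mathrm{tr}(L_X H L H)$ and expand by bilinearity into four traces:
\begin{align*}
\mathrm{tr}(L_X H L H) = \mathrm{tr}(L_X L) - \tfrac{1}{N}\mathrm{tr}(L_X L\mathbf{1}\mathbf{1}^{\top}) - \tfrac{1}{N}\mathrm{tr}(L_X\mathbf{1}\mathbf{1}^{\top}L) + \tfrac{1}{N^2}\mathrm{tr}(L_X\mathbf{1}\mathbf{1}^{\top}L\mathbf{1}\mathbf{1}^{\top}).
\end{align*}
I would then evaluate each of these four traces as an explicit multi-index sum, using repeatedly the two elementary identities $\mathrm{tr}(M\mathbf{1}\mathbf{1}^{\top}) = \mathbf{1}^{\top}M\mathbf{1}$ and the cyclicity of the trace, together with the symmetry of the Gram matrices $L_X$ and $L$.

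Second, I would match the resulting sums term by term against the V-statistic, interpreting the latter in its single-sample form where the primed copy coincides with the sample, i.e. $X_j' = X_j$ and $Y_j' = Y_j$, so that $L_X = (k_X(X_i,X_j))_{ij}$ and $L = (k_Y(Y_i,Y_j))_{ij}$. The leading trace gives $\mathrm{tr}(L_X L) = \sum_{i,j}k_X(X_i,X_j)k_Y(Y_i,Y_j)$ after using symmetry, which reproduces the first term. The last trace factorizes as $\mathrm{tr}(L_X\mathbf{1}\mathbf{1}^{\top}L\mathbf{1}\mathbf{1}^{\top}) = (\mathbf{1}^{\top}L_X\mathbf{1})(\mathbf{1}^{\top}L\mathbf{1}) = \big(\sum_{i,j}k_X(X_i,X_j)\big)\big(\sum_{i,j}k_Y(Y_i,Y_j)\big)$, reproducing the second term once the $1/N^4$ prefactor is accounted for.

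The main obstacle — and the step demanding the most care — is the pair of cross terms. Here I would show that both $\mathrm{tr}(L_X L\mathbf{1}\mathbf{1}^{\top}) = \mathbf{1}^{\top}L_X L\mathbf{1}$ and $\mathrm{tr}(L_X\mathbf{1}\mathbf{1}^{\top}L) = \mathbf{1}^{\top}L L_X\mathbf{1}$ collapse, after relabeling the free indices and invoking the symmetry $k_X(X_i,X_j) = k_X(X_j,X_i)$, to the same quantity
\begin{align*}
\sum_{j=1}^N\Big(\sum_{i=1}^N k_X(X_i,X_j)\Big)\Big(\sum_{k=1}^N k_Y(Y_j,Y_k)\Big),
\end{align*}
so that together they contribute $-\frac{2}{N}$ times this sum. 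Dividing through by $N^2$ then yields exactly the third term $-\frac{2}{N}\sum_i\big(\frac{1}{N}\sum_j k_X(X_i,X_j)\big)\big(\frac{1}{N}\sum_j k_Y(Y_i,Y_j)\big)$ of the V-statistic. Collecting the four matched contributions and factoring out the global $1/N^2$ establishes the identity.
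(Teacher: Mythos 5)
Your argument is correct: expanding $H = I_N - \frac{1}{N}\mathbf{1}\mathbf{1}^{\top}$, distributing the trace into four terms, and matching them (via cyclicity and kernel symmetry) against the single-sample V-statistic with $X_j'=X_j$, $Y_j'=Y_j$ is exactly the standard derivation, and your bookkeeping of the prefactors ($1/N^2$, $1/N^4$, $-2/N$) checks out. Note that the paper itself gives no proof of this proposition — it simply states the identity as known from Gretton et al.\ — so your computation fills in the omitted verification rather than diverging from an existing one.
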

The main advantage of this formulation is that it requires only one sample of $(X, Y)$, thanks to an i.i.d. sample. 

\subsection{Estimation based on Random Quantization}
As shown in section \ref{section:3} quantization-based LHS can be used to evaluate expectation while preserving dependency among a group of inputs. The idea is here to apply these results to the computation of HSIC. Let's define the function $f$ such that for all $(x,x')\in\mathbb{R}^2,\hspace{2pt} f(x,x') = k_X(x, x')k_Y\left(\mathcal{M}(x),\mathcal{M}(x')\right)$. 
Let $(U_i)_{i=1 \dots N}$ a sample provided by Algorithm \ref{alg:RQ} where $U_i \sim \mathcal{L}(X | X\in C_i)$, $(C_i)_{i=1,\dots, N}$ is the Voronoi tessellation associated to the quantization of $X$ and $p_i = \mathbb{P}\left[X\in C_{i}\right]$.
     \begin{proposition}
\begin{align}\label{eq:HSICRQ}
         \widehat{\mathrm{HSIC}}(X, Y) = \sum_{i,j=1}^{N}p_{i}p_{j}f(U_{i}, U_{j}) + \sum_{i,j=1}^{N}p_{i}p_{j} k_X\left(U_i, U_j\right)\sum_{i,j=1}^{N}p_{i}p_{j} k_Y\left(\mathcal{M}\left(U_i\right), \mathcal{M}\left(U_j\right)\right) \\-2\sum_{i=1}^{N}p_{i}\left[\left(\sum_{j=1}^{N}p_{j}k_X\left(U_i, U_j\right)\right)\left(\sum_{j=1}^{N}p_{j}k_Y\left(\mathcal{M}(U_i), \mathcal{M}\left(U_j\right)\right)\right)\right]
     \end{align}
\end{proposition}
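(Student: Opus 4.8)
The plan is to obtain the formula \eqref{eq:HSICRQ} not as an unbiasedness claim but as a direct substitution: each of the three expectations appearing in the simplified HSIC expression of the preceding proposition is replaced by its Random Quantization cubature counterpart, exactly as the V-statistic used in the SRS-based estimator replaces them by their empirical $1/N$-weighted averages. Throughout I would use that, by the RQ proposition of Section \ref{sec:RQ}, for any integrable $g$ one has $\mathbb{E}[g(X)] = \sum_{i=1}^N p_i \mathbb{E}[g(U_i)]$, so that $\sum_{i=1}^N p_i g(U_i)$ is the natural RQ estimator of $\mathbb{E}[g(X)]$, with $U_i \sim \mathcal{L}(X \mid X \in C_i)$ and $p_i = \mathbb{P}[X \in C_i]$. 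I would also record the identities $Y = \mathcal{M}(X)$ and $f(x,x') = k_X(x,x')\,k_Y(\mathcal{M}(x),\mathcal{M}(x'))$, so that $k_Y(Y,Y') = k_Y(\mathcal{M}(X),\mathcal{M}(X'))$ and the first HSIC term is precisely $\mathbb{E}[f(X,X')]$.

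First I would treat the first term. Since $(X',Y')$ is an i.i.d. copy of $(X,Y)$, the quantity $\mathbb{E}[f(X,X')] = \mathbb{E}_X \mathbb{E}_{X'}[f(X,X')]$ is a product of two identical marginal integrations. Applying the RQ cubature to each factor and reusing a single quantization sample $(U_i)_i$ for both copies (the V-statistic convention) yields $\sum_{i,j} p_i p_j f(U_i, U_j)$, which is the first summand of \eqref{eq:HSICRQ}. The second term $\mathbb{E}[k_X(X,X')]\,\mathbb{E}[k_Y(Y,Y')]$ factorizes over two independent double expectations; applying the same device to each gives $\big(\sum_{i,j} p_i p_j k_X(U_i,U_j)\big)\big(\sum_{i,j} p_i p_j k_Y(\mathcal{M}(U_i),\mathcal{M}(U_j))\big)$, the second summand.

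The delicate step is the third term $\mathbb{E}\big[\mathbb{E}[k_X(X,X') \mid X]\,\mathbb{E}[k_Y(Y,Y') \mid Y]\big]$, which I would handle by the tower property in two layers. For a fixed outer argument, the inner conditional expectation $\mathbb{E}[k_X(x,X') \mid X=x]$ is an integral against the law of the copy $X'$ alone; estimating it by RQ with the same quantizer and evaluating at $x = U_i$ gives $\sum_{j} p_j k_X(U_i, U_j)$, and likewise $\mathbb{E}[k_Y(\mathcal{M}(x),\mathcal{M}(X')) \mid X=x]$ becomes $\sum_j p_j k_Y(\mathcal{M}(U_i),\mathcal{M}(U_j))$. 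Since $Y = \mathcal{M}(X)$, the product of these two conditional factors is a function of $X$ alone; applying the outer RQ cubature $\sum_i p_i(\cdot)$ produces $\sum_i p_i\big(\sum_j p_j k_X(U_i,U_j)\big)\big(\sum_j p_j k_Y(\mathcal{M}(U_i),\mathcal{M}(U_j))\big)$, and the factor $-2$ gives the last summand. Summing the three contributions yields the claimed identity.

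The main obstacle I anticipate is the bookkeeping in this third term: one must verify that the two nested conditional expectations are taken with respect to the same copy $X'$ while sharing the conditioning on $X$, so that the two inner RQ sums run over a common index $j$ and the outer sum over $i$, rather than introducing a spurious fourth summation. It is also worth emphasizing, as in the SRS case, that reusing one sample for both i.i.d. copies makes this the V-statistic (plug-in) analogue, so the resulting estimator inherits the finite-$N$ bias of V-statistics while remaining consistent as $N \to \infty$; no independent second sample is required.
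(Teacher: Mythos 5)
Your proposal is correct and matches the paper's (implicit) construction: the paper states this formula without a written proof, presenting it exactly as the plug-in estimator obtained by substituting the Random Quantization cubature $\sum_i p_i(\cdot)$ for each of the three expectations in the simplified HSIC expression, in direct analogy with the $1/N$-weighted V-statistic of the SRS case. Your remark that reusing one sample for both i.i.d.\ copies makes this a V-statistic analogue with finite-$N$ bias is an accurate and worthwhile observation that the paper leaves unstated.
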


\subsection{HSIC-based independence test}
The main interest of HSIC is to identify input parameters that do not affect the output. In order to obtain a distance in the RKHS, the kernels must be characteristic, i.e. injective. Therefore, the following equivalence holds for $A\subset \{1,\dots, d\}$:
\begin{align*}
    X_A \indep Y \Longleftrightarrow \mathrm{HSIC}(X_A, Y) = 0
\end{align*}
HSIC can be used to construct a statistical test of independence based on this result, introduced by \cite{gretton_kernel_2007}. The null hypothesis $\mathcal{H}_0$ : "$X_A$ and $Y$ are independent" is equivalent to $\mathrm{HSIC}(X_A,Y) = 0$. The statistic corresponding to this test is :
\begin{align*}
    \widehat{\mathcal{S}} = N\times\widehat{\mathrm{HSIC}}(X_A, Y)
\end{align*}
The p-value represents the probability that, under the null hypothesis $\mathcal{H}_0$, the observed value $\widehat{S_{obs}} = N\times \widehat{\mathrm{HSIC}}(X_A,Y)_{obs}$ is greater than $\widehat{S}$ :
\begin{align*}
\mathrm{p}_\text{val} = \mathbb{P}\left[ \widehat{\mathcal{S}} \geq \widehat{\mathcal{S}}_{\text{obs}} \hspace{1pt} \vert \hspace{1pt} \mathcal{H}_0\right]
\end{align*}
Hence, $\mathcal{H}_0$ is rejected if $\mathrm{p}_\text{val} < \alpha$, where $\alpha$ is the first order risk of the test, i.e., the risk of falsely rejecting $\mathcal{H}_0$. In practice, $\widehat{S}\mid \mathcal{H}_0$ distribution is not known. It can be approximated asymptotically to a gamma distribution (see \cite{gretton_kernel_2007}), which requires a sample size of several hundred. Alternatively, a test based on permutations and Bootstrap can be used (see \cite{de_lozzo_new_2016}).

\section{Numerical experiments}\label{section:5}
In this section, we compare the Quantization-based LHS approach to Monte Carlo and LHSD on operational environmental models. The first case examines flood risk, where there is perfect knowledge of the dependency structure and the characteristics of the marginal laws. The second case studies the sizing of a grass strip in an agricultural context, where dependencies are unknown. 
\subsection{Case study I: sampling for a 1D hydro-dynamical model of flood risk}
In this first real application, the risk for an industrial site to be flooded by a river when its height exceeds a dyke is simulated with a simplified 1D- St-Venant equation \cite{Mondal_2020}. Results with LHSD and Quantization-based LHS samplings are compared to study stratified random sampling for dependent inputs when the dependence structure is well known \cite{ioossedf, chastaing2012generalized}. 

The problem consists of 8 dependent input variables, $(Q, K_{s}, Z_{v}, Z_{m}, H_{d}, C_{b}, L, B)$ summarized in Table \ref{tab:tab1}. Let us estimate $\mathbb{E}\left[S\right]$ where :
\begin{align*}
    S &= Z_{v} + H - H_{d} - C_{b}\\
    H &= \left(\frac{Q}{BK_{s}\sqrt{\frac{Z_{m} - Z_{v}}{L}}}\right)^{0.6}
\end{align*}

\begin{table}[htbp]
    \centering
    \begin{tabular}{@{}llll@{}}
        \toprule
        \textbf{Input} & \textbf{Description} & \textbf{Unit} & \textbf{Probability Distribution} \\ \midrule
        $Q$ & Maximum annual flow rate & m$^3$/s & Truncated Gumbel $G(1013, 558)$ over $[500, 3000]$ \\
        $K_s$ & Strickler coefficient & – & Truncated Normal $\mathcal{N}(30, 8)$ over $[15, \infty)$ \\
        $Z_v$ & Downstream river level & m & Triangular $T(49, 50, 51)$ \\
        $Z_m$ & Upstream river level & m & Triangular $T(54, 55, 56)$ \\
        $H_d$ & Height of the dike & m & Uniform $\mathcal{U}([7, 9])$ \\
        $C_b$ & Bank level & m & Triangular $T(55, 55.5, 56)$ \\
        $L$ & Length of the river section & m & Triangular $T(4990, 5000, 5010)$ \\
        $B$ & Width of the river & m & Triangular $T(295, 300, 305)$ \\ \bottomrule
    \end{tabular}
    \caption{Description of the model inputs.}
    \label{tab:tab1}
\end{table}

The eight inputs of the flood problem are dependent, and a Gaussian copula is proposed for the joint distribution:
\begin{align*}
C(u_1, \dots, u_{d}) = \Phi_{\text{joint}}\left(\Phi^{-1}(u_1), \dots, \Phi^{-1}(u_d)\right)
\end{align*}
where, $\Phi_{\text{joint}}$ is the cumulative distribution function of the multivariate Gaussian distribution with covariance matrix $\Sigma$, and $\Phi$, the cumulative distribution function of the standard normal distribution. In the case of the flood, dependency only exists pairwise, with the following correlation coefficients: $\rho(Q, K_s) = 0.5, \hspace{2pt} \rho(Z_v, Z_m) = \rho(L,B)=0.3$.

\begin{figure}[htbp]
    \centering
    \includegraphics[width=\textwidth]{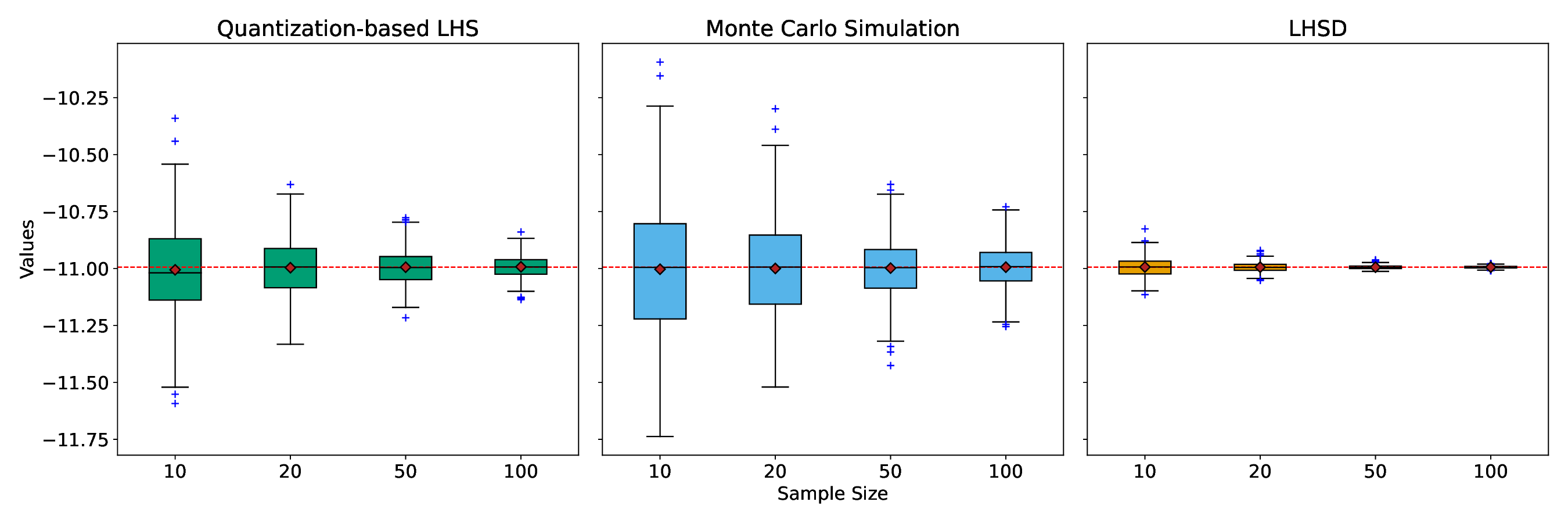}
    \caption{Estimation of $\mathbb{E}[S]$ with 500 repetitions per sample size $N \in \{10, 20, 50, 100\}$. }
    \label{fig:fig9}
\end{figure}
The results show that both LHSD and Quantization-based LHS offer better performance than Monte Carlo, although $\mu_{QLHS}$ has a higher variance than LHSD (Figure \ref{fig:fig9}). This was expected given that LHSD possesses analytical knowledge of the copula and the c.d.f and inverse c.d.f of the marginals. In environmental modeling, however, the inputs must be measured on the field or come from empirical relationships, for example. In most cases, information on the dependency structure between inputs may be completely unknown, or limited, for example coming from a random generator. In that case, the Quantization-based LHS design proposed in section \ref{sec:RQ} is particularly adapted since it only requires the application of k-means over a large sample size. In the next section, these sampling strategies are implemented and compared on a digital twin of an agricultural catchment, without any analytical knowledge of dependency.

\subsection{Case study II : sampling of Soil water retention for pesticide transfer modeling}
\subsubsection{Model and data description}
In order to reduce the river's pollution in agricultural catchments, some best management practices consist in applying vegetative filter strips (VFSs) that reduce significantly surface runoff and erosion from the cultivated fields \cite{lacas2005,reichenberger2007}. These nature-based solutions must be designed optimally to be efficient and socially accepted, considering the local conditions of soil, climate, topography, and cultural practices. To that aim, \cite{Carluer2017,veillon2022buvardmes} developed the decision-making tool BUVARD\_MES for french farmers or stakeholders in the water quality domain, based on the benchmark numerical model VFSMOD \cite{carpena1999,munoz-carpena_design_2004,laucar2018} (see figure \ref{fig:fig10}). In this study case, BUVARD\_MES is extended on the digital twin of the Morcille catchment (Figure \ref{fig:fig11}), a vineyard agricultural place in the Beaujolais region (France), where water, sediment and pesticide are intensively measured for more than 30 years \cite{Morcille_datapaper}. This digital twin, deeply tested and described in \cite{PULSE_rapport}, allows simulating transfers in fields and VFSs in all possible places of the catchment, thus running on a large sample of inputs. 

\begin{figure}[htbp]
    \centering
    \includegraphics[width=0.8\textwidth]{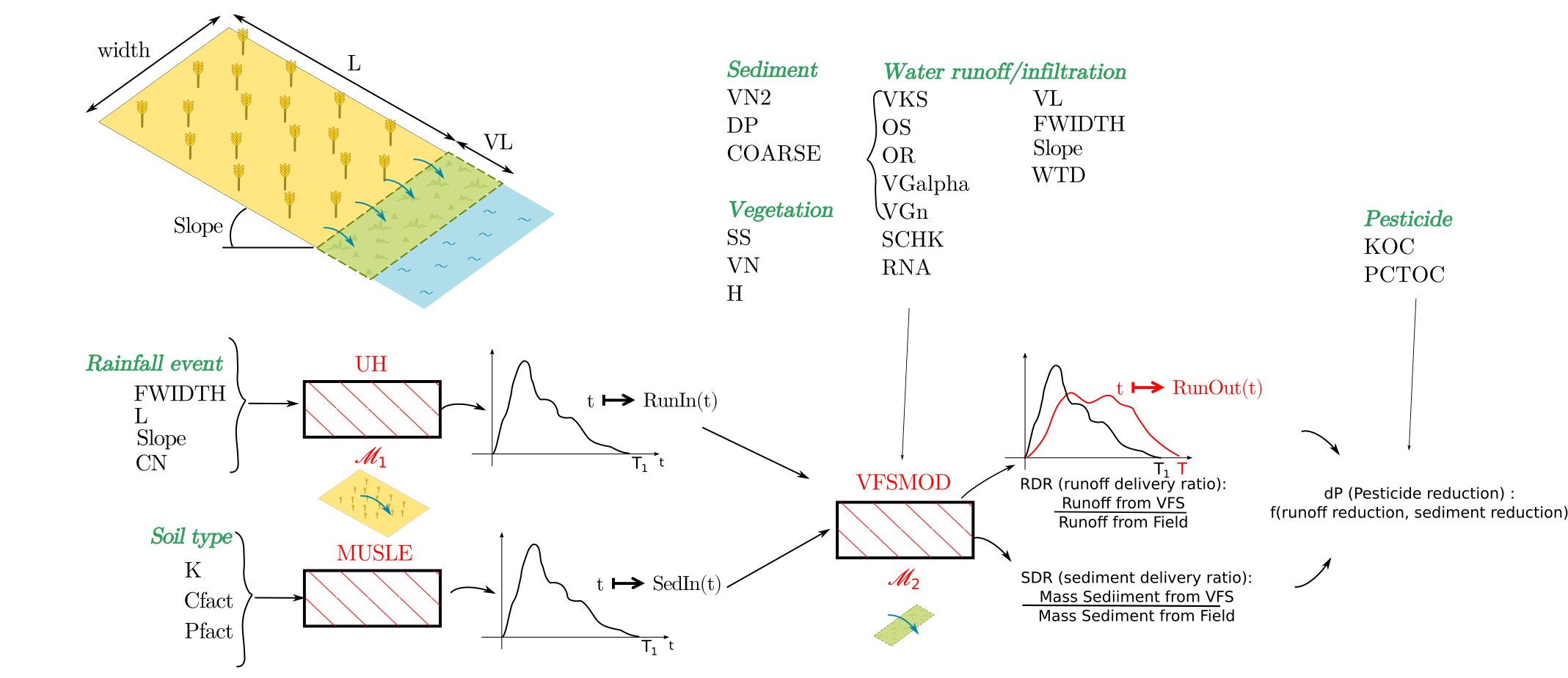}
    \caption{BUVARD\_MES model and its sub-models, with inputs for climate, soil, vegetation properties of the fields and VFSs. The group of (Van Genuchten) dependent parameters is indicated by a brace.}
    \label{fig:fig10}
\end{figure}

\begin{figure}[htbp]
    \centering
    \includegraphics[width=0.8\textwidth]{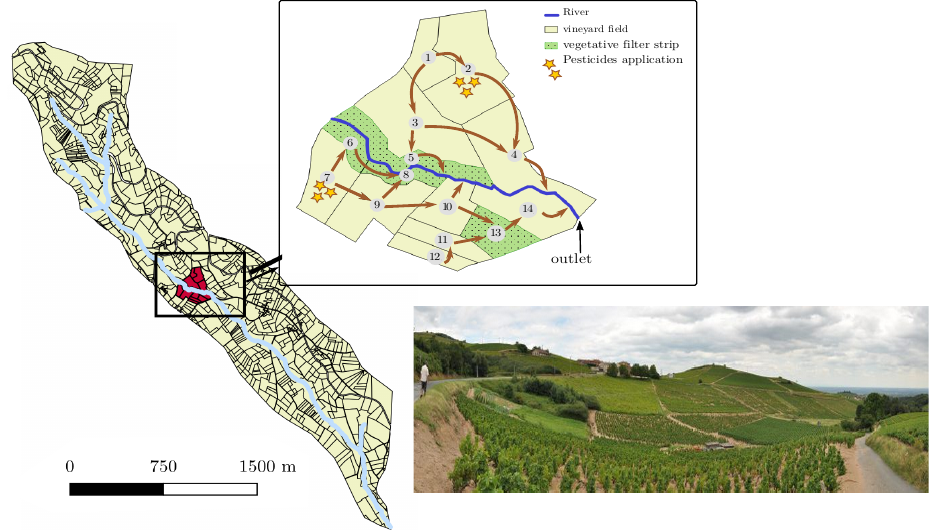}
    \caption{The Morcille catchment (left and bottom) and its digital twin, in the Beaujolais vineyard region (France). Example of surface properties extracted from the virtual catchment (top).}
    \label{fig:fig11}
\end{figure}

\subsubsection{Soil water retention estimation}
In the model, infiltration in presence of a water table is represented by the SWINGO algorithm \cite{carlaucar2018}, which depends on Van Genuchten soil hydraulic functions (VG, \cite{VG} , eq. \ref{eq:VG_theta}). 
\begin{equation}\label{eq:VG_theta}
    \theta(h) = \theta_{r} +\frac{\theta_{s} - \theta_{r}}{\left(1 + \left(\alpha\vert h\vert \right)^{n}\right)^{1 - 1/n}}
\end{equation}
where $\theta_{s}$ is the saturated water content, $\theta_{r}$ is the residual water content, $\alpha$ is linked to the inverse of the air entry suction and $n$ is related to the pore-size distribution.\\
This conductivity is described by: 
\begin{equation}
    K_v(h) = K_{\text{sat}}\sqrt{S(h)}\left(1 - \left(1 - S(h)^{\frac{1}{1 - 1/n}}\right)^{1- 1/n}\right)^2
\end{equation}
where $K_{\text{sat}}$ is the hydraulic conductivity at saturation and :
\begin{align*}
    S(h) = \frac{\theta(h) - \theta_r}{\theta_s - \theta_r}
\end{align*}

Dependencies between the soil properties in the VG equations are known to exist, but their structure is not explicitly known, despite many studies. For example, \cite{Lehmann2020} constraints the sampling by simultaneously estimating soil water characteristics and capillary length with pedotransfer functions, and \cite{Regaladocarpena2004} estimates a stochastic relation between some of the VG parameters on some specific soils. 
In order to account for this unknown information, two properties are considered to describe the inputs in BUVARD\_MES for the sampling: the first set of inputs consider them  as independent and thus random, and the second set is made of dependent variables (the Van Genuchten set, 5 parameters). For this set, a random generator was used on the data to generate the joint distribution.\\ 

Figure \ref{fig:fig12} illustrates $\theta(h)$ curves with Random Quantization  (Algorithm \ref{alg:RQ}) and LHS sampling, clearly showing the values of $\theta_r$ (minimum water content) and $\theta_s$ (maximum water content), which correspond to physical values and exhibit a trend consistent with reality. The LHS curves, however, are not in agreement with observed physical behavior.\\ 

\begin{figure}[htbp]
    \centering
    \includegraphics[width=0.9\textwidth]{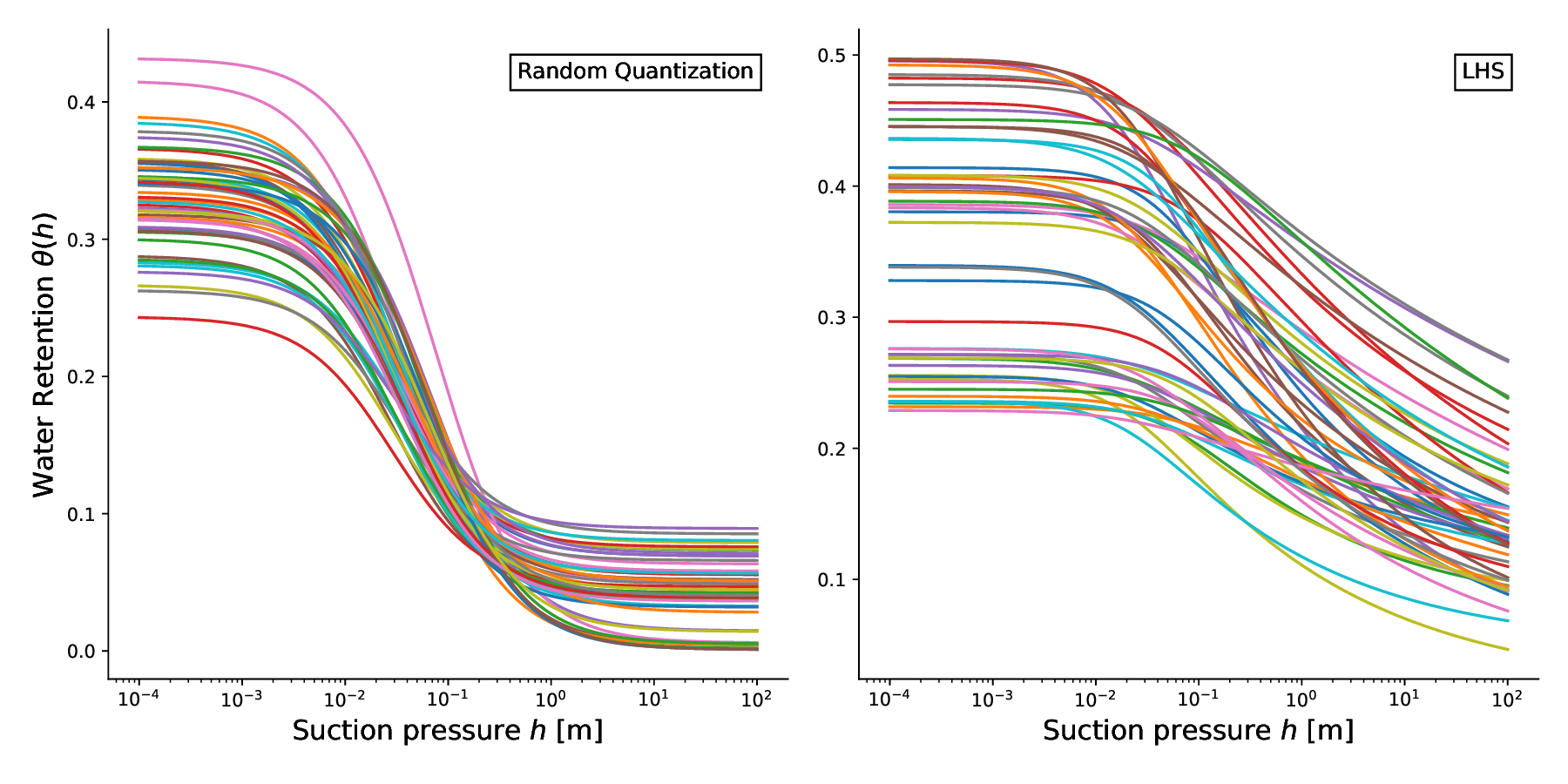}
    \caption{Water retention curves $\theta(h)$ for different Van Genuchten parameters samples based on Random Quantization (left) and LHS (right).}
    \label{fig:fig12}
\end{figure}

For the LHSD case, a Gaussian copula was fitted using maximum likelihood. As we do not have access to the quantile and cumulative distribution functions, we used their empirical versions. Results given in Figure \ref{fig:fig13} show that the Random Quantization estimator is unbiased with lower variance than Monte Carlo and LHSD. This confirms the relevance of this method, considering the simplicity of implementing this approach compared to LHSD. Indeed, it only requires a simple k-means, while LHSD requires estimating a good copula and distribution function estimates, which can be a time-consuming process. Figure \ref{fig:fig14} shows another illustration on the conductivity curve. Despite the difficulty of the problem, which lies in the small value to be estimated, all three methods provide satisfactory results. $\mu_{RQ}$ outperforms both LHSD and Monte Carlo by providing a lower variance unbiased estimate.

\begin{figure}[htbp]
    \centering
    \includegraphics[width=\textwidth]{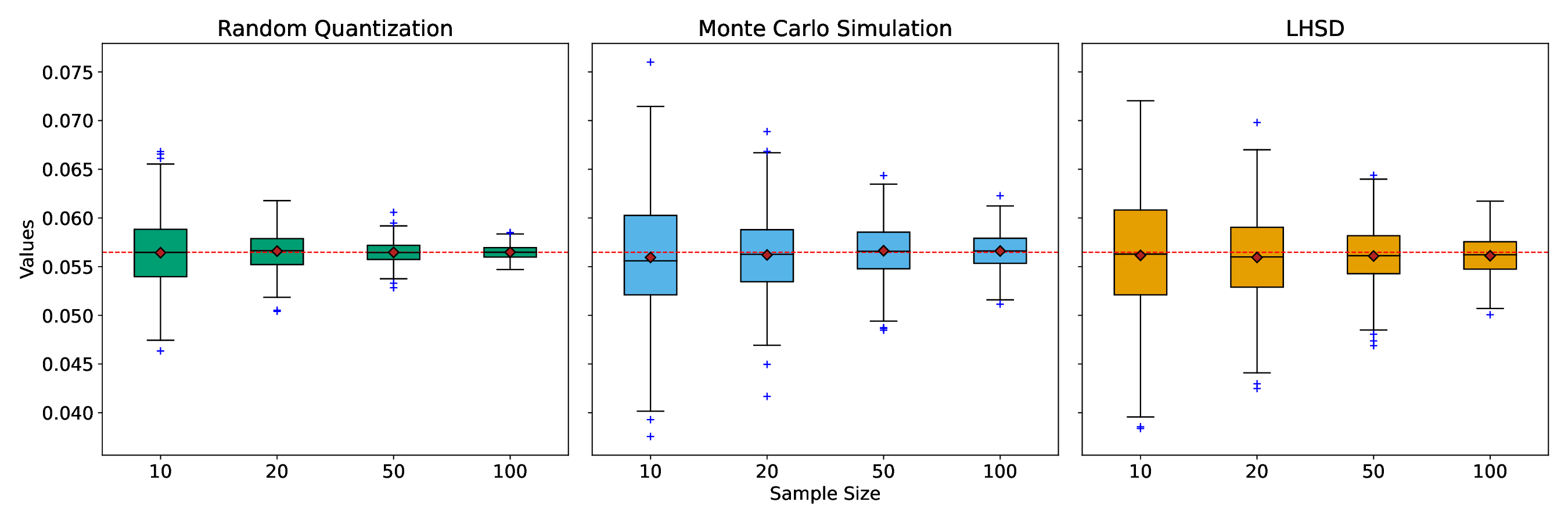}
    \caption{Comparison of Random Quantization, Monte Carlo, and LHSD on the Water content for $h = 1$m for sample sizes $N\in\{10, 20, 50, 100\}$, with 500 replicates per sample size. The LHSD was modelled using a Gaussian copula and estimated through maximum likelihood and empirical quantile function.}
    \label{fig:fig13}
\end{figure}

\begin{figure}[htbp] 
    \centering
    \includegraphics[width=\textwidth]{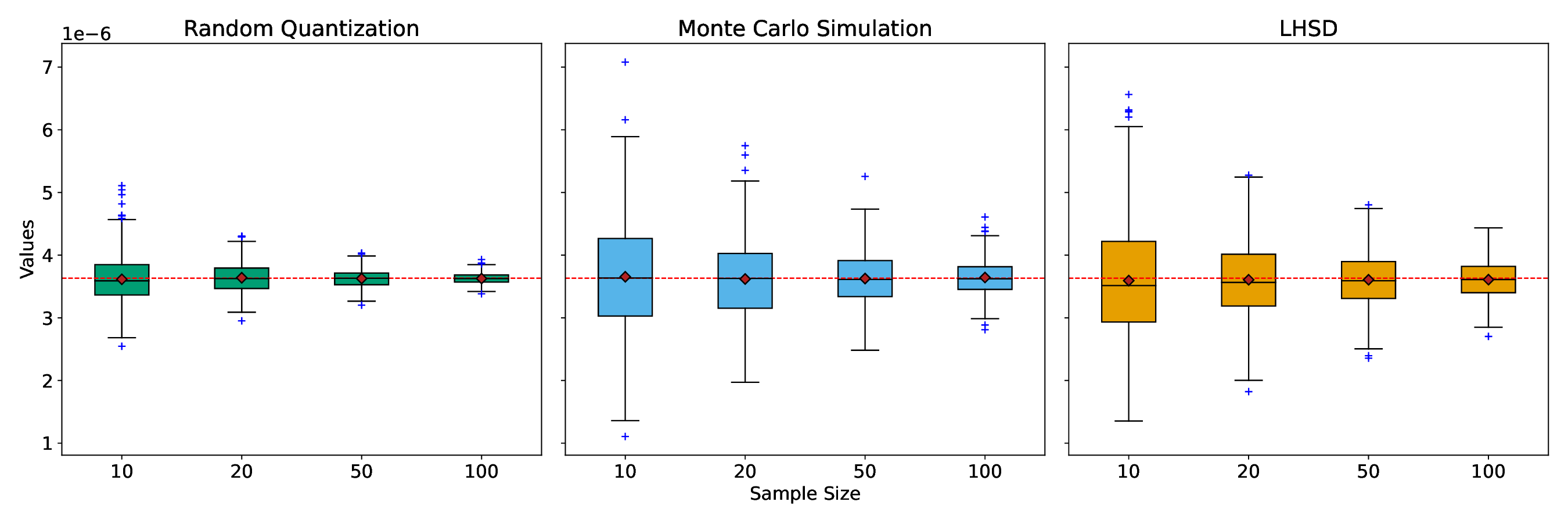}
    \caption{Comparison of Random Quantization, Monte Carlo, and LHSD on the conductivity curve estimation for $h=10^{-3}$ was conducted using sample sizes $N\in\{10, 20, 50, 100\}$, with 500 replicates per sample size. The LHSD was modelled using a Gaussian copula and estimated through maximum likelihood and empirical quantile function.}
    \label{fig:fig14}
\end{figure}

Finally, besides a point estimate in $h$ to illustrate the methods efficiency, the average water retention curve $\theta(h)$ was estimated on the whole range of values of pressure for all sampling sizes (Figure \ref{fig:fig15}). We observe a regularity in the estimation quality over the suction pressure. The $N=10$ case is highlighted in Figure \ref{fig:fig16}. The best results were obtained through random quantization sampling, showing the robustness of the method on the whole range.\\

\begin{figure}[htbp]
    \centering
    \includegraphics[width=0.95\textwidth]{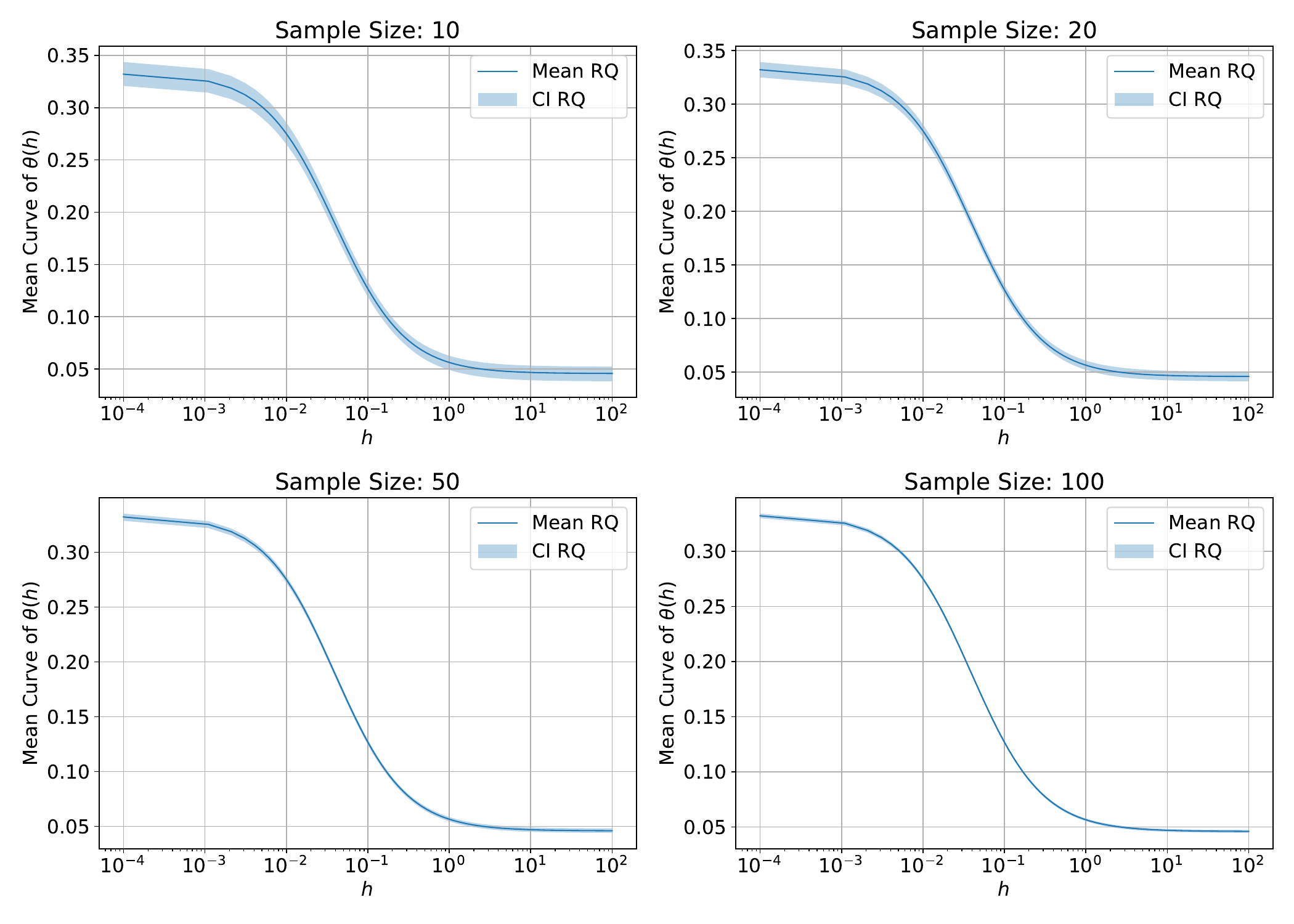}
    \caption{Mean Water retention curve $\theta(h)$ for $h\in [10^{-4}, 10^2]$ with Random Quantization sampling for different sample sizes $N \in \{10, 20, 50, 100\}$. The confidence region corresponds to the 2.5\%-percentile and 97.5\%-percentile.}
    \label{fig:fig15}
\end{figure}

\begin{figure}[htbp]
    \centering
    \includegraphics[width=0.55\textwidth]{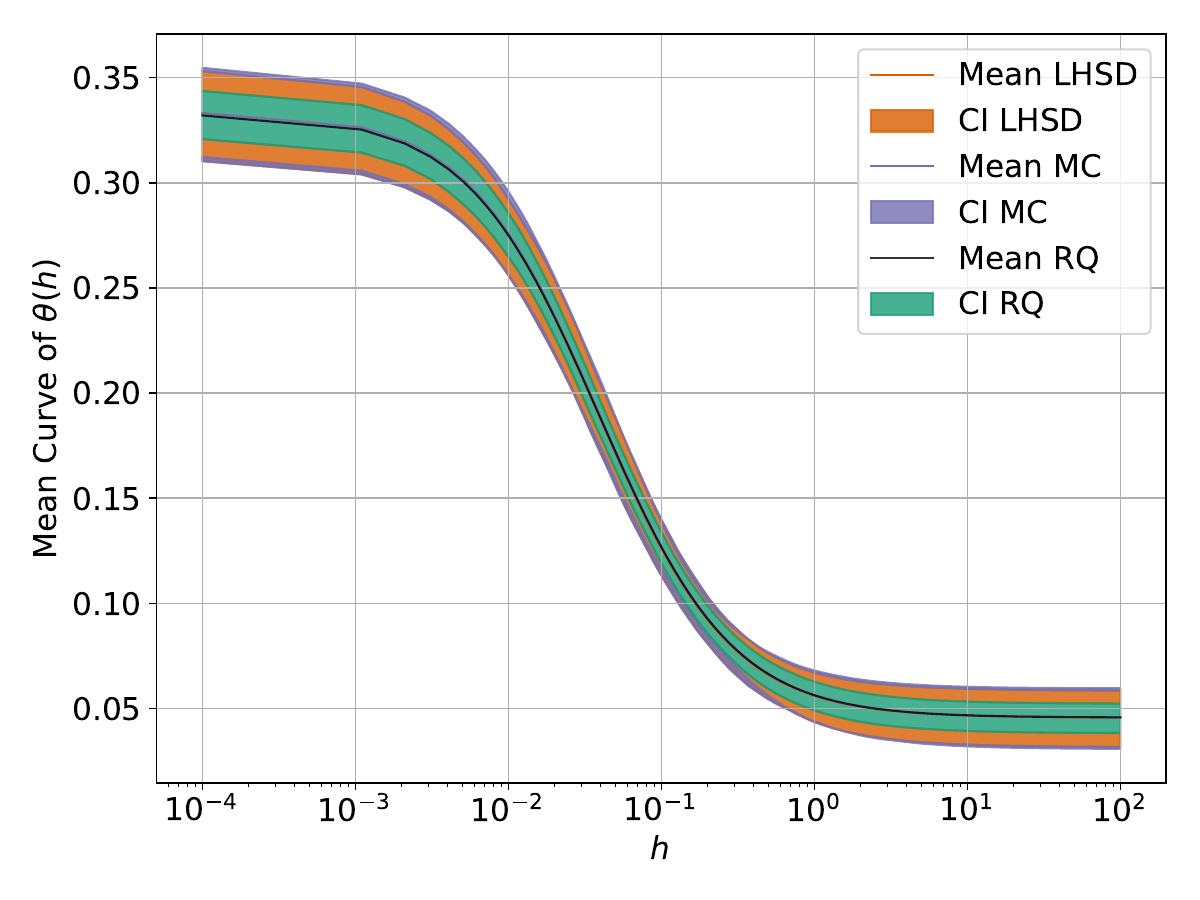}
    \caption{Mean Water retention curve $\theta(h)$ for $h\in [10^{-4}, 10^2]$ with Random Quantization, LHSD and Monte Carlo sampling with $N=10$. The confidence region corresponds to the 2.5\%-percentile and 97.5\% percentile.}
    \label{fig:fig16}
\end{figure}

\subsubsection{Sensitivity analysis with HSIC}

To reduce the complexity of the BUVARD-MES model, which is time-consuming to compute, it is necessary to reduce the dimensions in order to keep only those that influence the output. To address this issue, HSIC independence tests are performed as described in section  \ref{section:4}. The group of dependent variables is considered as a single input, referred to as Van Genuchten. For LHSD, the same parameterization as in the previous example is maintained.\\

In practice, the analysis included 10 uncorrelated inputs that described the geometric properties of the contributing surface (CA) and the VFS (Area, Length, Slope, Curve Number of the field ; Slope, Width, Organic Matter, Clay content and Water table depth of the VFS ; and the pesticide property Koc, see Table \ref{tab:tab2}), as well as properties related to pesticides and organic matter. Additionally, the group of 5 correlated Van Genuchten inputs were used to describe soil conductivity and water retention capacity. \\

The independent inputs are associated with a univariate Radial Basis Function (RBF) kernel, while the dependent group is associated with an adapted RBF kernel. To ensure consistency in the parameterization of this kernel, data were standardized, as the use of a standard deviation for the whole group is restrictive due to the 5 variables having a very variable order of magnitude (ranging from $10^{-5}$ to $10^1$). As the output is scalar, a univariate RBF is used. If the p-value is less than 5\%, $\mathcal{H}_0$ is rejected. Otherwise, $\mathcal{H}_0$ is accepted, i.e. the considered input is independent of the output.
\begin{align*}
    \forall(x, x') \in\mathbb{R}^2,\quad k_{\text{indep}}(x, x') = \exp\left(-\frac{(x-x')^2}{2\theta^2}\right), \quad \theta\in\mathbb{R}\\
    \forall(x, x') \in\left(\mathbb{R}^d\right)^2,\quad k_{\text{dep}}(x, x') = \exp\left(-\frac{\Vert x-x'\Vert^2}{2\theta^2}\right), \quad \theta\in\mathbb{R}
\end{align*}

The results of this independence test are summarized in Table \ref{tab:tab2}. The reference values have been obtained by an asymptotic test using a gamma distribution and a Monte Carlo draw of 10,000 points. For other methods, p-values are obtained by bootstrap. The results between the reference and the proposed Quantization-based LHS method are in agreement despite the small sample size. Furthermore, by using either the LHSD or Monte Carlo approach (with 400 points), we can accept the hypothesis that Van Genuchten is independent of runoff efficiency. This contradicts the reference and 'expert' knowledge. 

\begin{table}[htbp]
  \centering
  \begin{tabular}{lccccccccc}
    \toprule
    \multirow{2}{*}{Input} & \multicolumn{4}{c}{p-value} 
      & \multicolumn{4}{c}{Decision} \\
    \cmidrule(lr){2-5}
    \cmidrule(lr){6-9}
      & Ref & MC & QLHS & LHSD & Ref & MC & QLHS & LHSD \\
    \midrule
    Area  CA & $0$ & $0$ &$0$& $0$ & \ding{51} & \ding{51} & \ding{51} & \ding{51} \\
    Length CA & $0.38$ & $0.40$ &$0.86$& $0.75$ & \ding{55} & \ding{55} & \ding{55} & \ding{55} \\
    Width CA  & $0$ & $0$ &$0$& $0$ & \ding{51} & \ding{51} & \ding{51} & \ding{51} \\
    Slope CA & $0.54$ & $0.032$ &$0.33$& $0.87$ & \ding{55} & \ding{51} & \ding{55} & \ding{55} \\
    Slope VFS  & $0$ & $0.012$ & $0.0018$& $0.0004$ & \ding{51} & \ding{51} & \ding{51} & \ding{51} \\
    Width VFS  & $0$ & $0$ &$0$& $0$ & \ding{51} & \ding{51} & \ding{51} & \ding{51} \\
    OM VFS & $0.27$ & $0.81$ &$0.87$& $0.23$ & \ding{55} & \ding{55} & \ding{55} & \ding{55} \\
    WTD  VFS & $0$ & $0$ &$0$& $0$ & \ding{51} & \ding{51} & \ding{51} & \ding{51} \\
    C VFS & $0.87$ & $0.09$ &$0.29$& $0.85$ & \ding{55} & \ding{55} & \ding{55} & \ding{55} \\
    Koc & $0.38$ & $0.09$ &$0.67$& $0.89$ & \ding{55} & \ding{55} & \ding{55} & \ding{55} \\
    CN CA& $0$ & $0$ &$0$& $0$ & \ding{51} & \ding{51} & \ding{51} & \ding{51} \\
    Van Genuchten  & $0$ & $0.18$ &$0.031$ & $0.096$ & \ding{51} & \ding{55} & \ding{51} & \ding{55} \\
    \bottomrule
  \end{tabular}
  \caption{HSIC independence test on BUVARD-MES with 400 points per sample method. 'CA' stands for the contributive area of the VFS (the field), 'VFS' stands for vegetative filter strip.  WTD is Water Table depth, C is clay content, Koc is the pesticide soil adsorption coefficient, CN is the Curve Number. The HSIC for MC and LHSD were computed with Equation \ref{eq:HSICtrace}. For QLHS, with Equation \ref{eq:HSICRQ}. \ding{51} : The output is dependent of the input. \ding{55} : The output is independent of the input.}
  \label{tab:tab2}
\end{table}

\section{Conclusion}
This article proposes a new space-filling experimental design, called Quantization-based LHS, that incorporates dependency. The sampling is based on vector quantization, specifically k-means, ensuring ease of implementation while incorporating any structure of dependence. The DOE is built in an LHS way, ensuring comprehensive coverage of each marginal including groups of dependent inputs, and requires few evaluations. It allows for unbiased estimation of expectations in various configurations. 
The methodology has been applied to several case studies, including HSIC kernel sensitivity analysis. We show that the use of Quantization-based LHS allows for high-performance sensitivity analysis with a smaller sample size compared to existing sampling approaches.\\

Consequently, on the basis of this methodology, and while ensuring that the dependency structure of the inputs is taken into account, a screening step can be carried out that allows the input dimension to be reduced in order to limit the calls to the computational code and to build an accurate metamodel.

\section*{Acknowledgments}
This research was carried out with the support of the \href{https://doi.org/10.5281/zenodo.6581216}{CIROQUO} Applied Mathematics consortium, which brings together partners from both industry and academia to develop advanced methods for computational experimentation. The research is part of Water4All's AQUIGROW project, which aims to enhance the resilience of groundwater services under increased drought risk. The project has received funding from the European Union's Horizon Europe Program under Grant Agreement 101060874. 

\printbibliography
\end{document}